\newif\ifarxiv\arxivtrue%
\title{Tree Automata as Algebras: Minimisation~and~Determinisation}
\author{Gerco van Heerdt}%
{University College London, United Kingdom}%
{gerco.heerdt@ucl.ac.uk}%
{https://orcid.org/0000-0003-0669-6865}
{}
\author{Tobias Kapp\'e}%
{University College London, United Kingdom}%
{tkappe@cs.ucl.ac.uk}%
{https://orcid.org/0000-0002-6068-880X}
{}
\author{Jurriaan Rot}%
{University College London, United Kingdom and Radboud University, The Netherlands}%
{jrot@cs.ru.nl}%
{}%
{}
\author{Matteo Sammartino}%
{University College London, United Kingdom}%
{m.sammartino@ucl.ac.uk}%
{https://orcid.org/0000-0003-1456-2242}
{}
\author{Alexandra Silva}%
{University College London, United Kingdom}%
{alexandra.silva@ucl.ac.uk}%
{https://orcid.org/0000-0001-5014-9784}
{}
\authorrunning{G. van Heerdt, T. Kapp\'e, J. Rot, M. Sammartino, A. Silva}
\keywords{tree automata, algebras, minimisation, determinisation, Nerode equivalence}
\newcommand{\arity}{\mathsf{ar}}
\newcommand{\aut}{\mathcal{A}}
\newcommand{\tset}{\mathcal{T}}
\newcommand{\lang}{\mathcal{L}}
\newcommand{\freealg}[1]{#1^{\diamond}}
\newcommand{\Alg}{\mathsf{Alg}}
\newcommand{\Kl}{\mathcal{K}{\kern-.4ex}\ell}
\newcommand{\EM}{\mathcal{E}{\kern-.4ex}\mathcal{M}}
\newcommand{\Set}{\mathsf{Set}}
\newcommand{\Nom}{\mathsf{Nom}}
\newcommand{\C}{\mathsf{C}}
\newcommand{\two}{\mathsf{2}}
\newcommand{\one}{\mathsf{1}}
\newcommand{\epi}{\mathcal{E}}
\newcommand{\mono}{\mathcal{M}}
\newcommand{\pow}{\mathcal{P}}
\newcommand{\mult}{\mathcal{M}}
\newcommand{\fpow}{\pow_{\mathsf{f}}}
\newcommand{\sring}{\mathbb{S}}
\newcommand{\fld}{\mathbb{F}}
\newcommand{\fspow}{\pow_{\omega}}
\newcommand{\atoms}{\mathbb{A}}
\newcommand{\Sym}{\mathsf{Sym}}
\newcommand{\id}{\mathsf{id}}
\newcommand{\Quot}{\mathsf{Quot}}
\newcommand{\gfp}{\mathsf{gfp}}
\newcommand{\supp}{\mathsf{supp}}
\newcommand{\reach}{\mathsf{rch}}
\newcommand{\img}{\mathsf{img}}
\newcommand{\todot}{%
  \mathrel{\ooalign{\hfil$\vcenter{
    \hbox{$\scriptscriptstyle\bullet$}}$\hfil\cr$\to$\cr}
  }%
}
\newcommand{\arkl}[2]{\ar{#1}{#2}\ar[phantom]{#1}{\mbox{\scriptsize$\bullet$}}}
\newcommand{\arkls}[2]{\ar{#1}[swap]{#2}\ar[phantom]{#1}{\mbox{\scriptsize$\bullet$}}}
\newcommand{\klex}[1]{\widehat{#1}}
\newcommand*{\circled}[1]{{\tikz[baseline={(X.base)},scale=0.6]\node(X)[draw,shape=circle,scale=0.6,inner sep=2pt]{{#1}};}}
\definecolor{almond}{rgb}{0.94, 0.87, 0.8}
\theoremstyle{definition}
\newtheorem{assumption}[theorem]{Assumption}
\begin{document}

\maketitle

\begin{abstract}
We study a categorical generalisation of tree automata, as $\Sigma$-algebras for a fixed endofunctor $\Sigma$ endowed with initial and final states.
Under mild assumptions about the base category, we present a general minimisation algorithm for these automata.
We then build upon and extend an existing generalisation of the Nerode equivalence to a categorical setting and relate it to the existence of minimal automata.
Finally, we show that generalised types of side-effects, such as non-determinism, can be captured by this categorical framework, leading to a general determinisation procedure.
\end{abstract}

\section{Introduction}

Automata have been extensively studied using category theory, both from an algebraic and a coalgebraic perspective~\cite{Holcombe:1982:AAT:539040,AT89,DBLP:conf/concur/Rutten98,Pin}.
Categorical insights have enabled the development of generic algorithms for minimisation~\cite{DBLP:conf/fossacs/AdamekBHKMS12}, determinisation~\cite{DBLP:journals/corr/abs-1302-1046}, and equivalence checking~\cite{BonchiP15}.

A fruitful line of work has focussed on characterising the semantics of different types of automata as final coalgebras.
The final coalgebra contains unique representatives of behaviour, and
 the existence of a minimal automaton can be formalised by a suitable factorisation of the map from a given automaton into the final coalgebra.
Algorithms to compute the minimal automaton can be devised based on the final sequence, which yields procedures resembling classical partition refinement~\cite{DBLP:conf/ifipTCS/KonigK14,DorschMSW17}.
Unfortunately, bottom-up tree automata do not fit the abstract framework of final coalgebras.\footnote{The language semantics of top-down tree automata represented as coalgebras is given in~\cite{KlinR16}, based on a transformation to bottom-up tree automata. In this paper, we focus on bottom-up automata only.}
This impeded the application of abstract algorithms for minimisation, determinisation, and equivalence.
We embrace the categorical \emph{algebraic} view on automata due to Arbib and Manes~\cite{arbib1974} to study bottom-up tree automata (Section~\ref{sec:tree-aut}).
This algebraic approach is also treated in detail by Ad\'{a}mek and Trnkov\'{a}~\cite{AT89}, who, among other results, give conditions under which minimal realisations exist (see also~\cite{ADAMEK1977281}) and constructions to determinise partial and non-deterministic bottom-up tree automata.
However, generic \emph{algorithms} for minimisation, and a more abstract and uniform picture of determinisation, have not been studied in this context.

The contributions of this paper are three-fold. First, we explore the notion of \emph{cobase} to devise an iterative construction for minimising tree automata, at the abstract level of algebras, resembling partition refinement (Section~\ref{sec:minimisation}). The notion of cobase is dual to that of base~\cite{alwin}, which plays a key role in reachability of coalgebras~\cite{reachability19,BKR19} and therefore in minimisation of automata.
Second, we study a different characterisation of minimality via the Nerode equivalence, again based on work of Arbib and Manes~\cite{arbib1974}, and provide a generalisation using monads that allows to treat automata with equations (Section~\ref{sec:nerode}).
Third, we extend bottom-up tree automata to algebras in the Kleisli category of a monad, which enables us to study tree automata enriched with side-effects and derive an associated determinisation procedure (Section~\ref{sec:determinisation}). We demonstrate the generality of our approach by applying it both to classical examples---deterministic, non-deterministic, and multiplicity/weighted tree automata---and to a novel kind of tree automata, namely \emph{nominal} tree automata.

\section{Preliminaries}%
\label{sec:prelim}

We assume basic knowledge of category theory. Throughout this paper, we fix a category $\C$.

\subparagraph*{Monads}
A \emph{monad} on $\C$ is a triple $(T,\eta,\mu)$ consisting of an endofunctor $T$ on $\C$ and two natural transformations: a \emph{unit} $\eta\colon \mathsf{Id} \Rightarrow T$ and a \emph{multiplication} $\mu\colon T^2 \Rightarrow T$, which satisfy the compatibility laws $\mu \circ \eta_T = \id_T = \mu \circ T\eta$ and $\mu \circ \mu_T = \mu \circ T\mu$.

\begin{example}
The triple $(\fpow, \{-\}, \bigcup)$ is a monad on $\Set$, where $\fpow$ is the finite powerset functor
, $\{-\}$ is the singleton operation, and $\bigcup$ is union of sets. Another example is the \emph{multiplicity monad}
$(\mult_\fld, e, m)$ for a field $\mathbb F$, where $\mult_\fld$ is the functor sending a set $X$ to
$\mult_\fld X = \{\varphi \colon X \to \fld \mid \varphi \; \text{has finite support}\}$. An element $\varphi$ can be seen as a formal finite sum $\sum_i s_i x_i$, where each $x_i$ has multiplicity $s_i$.
 The unit $e$ sends $x$ to $1x$ and the multiplication is $m_X(\sum_i s_i \varphi_i)(x) = \sum_i s_i \cdot \varphi_i(x)$, where $\cdot$ is the field multiplication.
\end{example}

\subparagraph*{Algebras and Varietors}

We fix a functor $\Sigma: \C \to \C$ and write $\mathsf{Alg}(\Sigma)$ for the category of $\Sigma$-algebras.
Throughout this paper, we assume that $\Sigma$ is a \emph{varietor}~\cite{AT89}, i.e., that the forgetful functor $U\colon \mathsf{Alg}(\Sigma) \to \C$ admits a left adjoint $F \colon \C \to \mathsf{Alg}(\Sigma)$. The varietor $\Sigma$ induces a monad $(\Sigma^\diamond, \eta, \mu)$ on $\C$, where $\Sigma^\diamond = UF$.
Given an object $X$ of $\C$, we refer to $FX = (\Sigma^\diamond X, \alpha_X)$ as the \emph{free $\Sigma$-algebra} over $X$.
The free $\Sigma$-algebra satisfies the following: for every $\Sigma$-algebra $Q$ and every morphism $x \colon X \to UQ$ of $\C$, there is a unique $\Sigma$-algebra morphism $x^\sharp \colon (\freealg{\Sigma}X,\alpha_X) \to Q$ with $U(x^\sharp) \circ \eta_X = x$.

\begin{example}
A functor is \emph{finitary} if it preserves filtered colimits. Any finitary $\Set$ functor is a varietor: free algebras over a set $X$ can be obtained as a colimit of a transfinite sequence~\cite{adamek1974free,Kelly80}.
A \emph{polynomial functor} on $\Set$ is a functor inductively defined by
$
	P := \id \mid A \mid P_1 \times P_2 \mid \coprod_{i \in I} P_i
$
 where $A$ is any constant functor. Polynomial functors are finitary and therefore varietors.
\end{example}

\section{Tree automata, categorically}%
\label{sec:tree-aut}

In this section we start our categorical investigation of (bottom-up) tree automata.
We first discuss a general notion of automaton over an endofunctor $\Sigma$ due to Arbib and Manes~\cite{arbib1974} and then discuss how this notion can be instantiated to obtain various kinds of automata.

\begin{definition}[$\Sigma$-tree automaton]
A \emph{$\Sigma$-tree automaton} over objects $I$ and $O$ in $\C$ is a tuple $(Q, \delta, i, o)$ such that $(Q, \delta)$ is a $\Sigma$-algebra and $i \colon I \to Q$ and $o \colon Q \to O$ are morphisms of $\C$. The objects $I$ and $O$ are referred to as the \emph{input object} and \emph{output object} respectively.
A \emph{homomorphism} from an automaton $(Q,\delta,i,o)$ to an automaton $(Q',\delta',i',o')$ is a $\Sigma$-algebra homomorphism $h \colon Q \rightarrow Q'$ (i.e., $\delta' \circ \Sigma h = h \circ \delta$) such that $h \circ i = i'$ and $o' \circ h = o$.
\end{definition}
Throughout this paper, we fix input and output objects $I$ and $O$ respectively.
If $\Sigma$ is clear from the context we sometimes refer to a
$\Sigma$-tree automaton simply as an automaton.

A \emph{language} (over $\Sigma$) is a morphism $L \colon \freealg{\Sigma} I \rightarrow O$.
In the context of an automaton $\mathcal{A} = (Q, \delta, i, o)$, we can think of $U(i^\sharp) \colon \Sigma^\diamond I \to Q$, induced by the free $\Sigma$-algebra $FI$ on $I$, as the \emph{reachability map}, telling us which state is reached by parsing an element of the free algebra over $I$.
We will write $\reach_{\mathcal{A}}$ (or $\reach$ if $\mathcal{A}$ is obvious) for $U(i^\sharp)$.
The \emph{language of $\mathcal{A}$} is the morphism $\lang(\mathcal{A}) \colon \freealg{\Sigma} I \to O$ in $\C$ given by $\lang(\mathcal{A}) = o \circ \reach_\mathcal{A}$.

\begin{example}[Deterministic bottom-up tree automata]
Let us see how $\Sigma$-tree automata can capture deterministic bottom-up tree automata.
We first recall some basic concepts.

A \emph{ranked alphabet} is a finite set of symbols $\Gamma$, where each $\gamma \in \Gamma$ is equipped with an \emph{arity} $\arity(\gamma) \in \mathbb{N}$.
A \emph{frontier alphabet} is a finite set of symbols $I$.
The set of \emph{$\Gamma$-trees} over $I$, denoted $\tset_\Gamma(I)$, is the smallest set such that $I \subseteq \tset_\Gamma(I)$, and for all $\gamma \in \Gamma$ we have that $t_1,\dots,t_{\arity(\gamma)} \in \tset_\Gamma(I)$ implies $(\gamma, t_1,\dots,t_{\arity(\gamma)}) \in \tset_\Gamma(I)$.
In other words, $\tset_\Gamma(I)$ consists of finite trees with leaves labelled by symbols from $I$ and internal nodes labelled by symbols from $\Gamma$; the number of children of each internal node matches the arity of its label.

A ranked alphabet $\Gamma$ gives rise to a polynomial \emph{signature endofunctor} $\Sigma \colon \Set \to \Set$ given by $\Sigma X = \coprod_{\gamma \in \Gamma} X^{\arity(\gamma)}$.
A \emph{deterministic bottom-up tree automaton} is a $\Sigma$-tree automaton $\aut = (Q, \delta, i, o)$ where
$Q$ is finite, $\Sigma$ is a signature endofunctor, and $O = \two$. Here $Q$ is the set of \emph{states}, $i \colon I \to Q$ is the \emph{initial assignment}, $o \colon Q \to \two$ is the characteristic function of \emph{final} states, and for each $\gamma \in \Gamma$ we have a transition function $\delta_\gamma = \delta \circ \kappa_\gamma \colon Q^{\arity(\gamma)} \to Q$.
The \emph{language} $\lang(\aut)$ is the set of all $\Gamma$-trees $t$ such that $(o \circ \hat{\delta})(t) = 1$, where $\hat{\delta} \colon \tset_\Gamma(I) \to Q$ extends $\delta$ to trees by structural recursion:
\begin{mathpar}
\hat{\delta}(\ell) = i(\ell) \quad (\ell \in I)
\and
\hat{\delta}(\gamma,t_1,\dots,t_k) = \delta_\gamma(\hat{\delta}(t_1),\dots,\hat{\delta}(t_k))
\end{mathpar}
In other words, $\lang(\aut)$ contains the trees that evaluate to a final state.
The map $\hat{\delta}$ above is the transpose $i^\sharp$ in the relevant adjunction between $\Set$ and $\Alg(\Sigma)$, where the left adjoint sends a set $I$ to the $\Sigma$-algebra with carrier $\tset_\Gamma(I)$ and the obvious structure map.
\end{example}

\subsection{Nominal tree automata}%
\label{sec:nom-tree}

To show the versatility of our definition, we instantiate it in the category $\Nom$ of nominal sets and equivariant functions.
This results in a notion of nominal tree automaton---along the lines of nominal automata theory~\cite{BojanczykKL14}---which, as we shall see below, provides a useful model for languages of trees with variables and variable binding.
We first recall some basic notations of nominal set theory~\cite{Pitts13}.
Let $\atoms$ be a countable set of \emph{atoms}, and let $\Sym(\atoms)$ be the associated symmetry group, consisting
of all permutations on $\atoms$.
A \emph{nominal set} is a pair $(X,\cdot)$ of a set $X$ and a function $\cdot \colon \Sym(\atoms) \times X \to X$ forming a left action of
$\Sym(\atoms)$ on $X$.
Each $x \in X$ is required to have \emph{finite support}, i.e., there must exist a finite $A \subseteq \atoms$ such that for all $\pi \in \Sym(\atoms)$, if $\pi$ is equal to $\id_A$ when restricted to $A$, then $\pi \cdot x = x$. The minimal such $A$ is denoted $\supp(x)$, and can be understood as the set of ``free'' names of $x$. Given $x \in X$, its \emph{orbit} is the set $\{ \pi \cdot x \mid \pi \in \Sym(\atoms) \}$. We say that a nominal set $X$ is \emph{orbit-finite} whenever it has finitely many orbits. An \emph{equivariant function} $f \colon (X,\cdot) \to (Y,\cdot)$ is a function $X \to Y$ that respects permutations, i.e., $f(\pi \cdot x) = \pi \cdot f(x)$.

Polynomial functors in $\Nom$ support additional operations~\cite{Fiore12}, such as the \emph{name abstraction} functor $[\atoms] \colon \Nom \to \Nom$, which ``binds'' a name in the support. For instance, if $x \in X$, then $\langle a \rangle x \in [\atoms]X$, with $\supp(\langle a \rangle x) = \supp(x) \setminus \{a\}$. The element $\langle a \rangle x$ should be thought of as an equivalence class \emph{up to $\alpha$-conversion} w.r.t.\ the binder $\langle a \rangle$. We can then define tree automata for parsing trees with binders. Consider for instance $\Sigma_\lambda \colon \Nom \to \Nom$ given by
\[
	\Sigma_\lambda X = \underbrace{X \times X}_{\mathsf{appl}} + \underbrace{[\atoms] X}_{\mathsf{lambda}}
\]
describing the syntax of the $\lambda$-calculus~\cite{GabbayM09}. This functor is finitary~\cite{Fiore12}, which implies the existence of free algebras.
Fixing $I = \atoms$, the carrier of the free $\Sigma_\lambda$-algebra over $I$ consists of parse trees for $\lambda$-terms (up to $\alpha$-conversion) with variables in $\atoms$. We can then define automata parsing these trees as $\mathcal{A} = (Q,\delta,i,o)$, where
\begin{itemize}
	\item $Q$ is a nominal set;
	\item $\delta$ consists of two equivariant functions $\delta_{\mathsf{appl}} \colon Q \times Q \to Q$ and $\delta_{\mathsf{lambda}} \colon [\atoms] Q \to Q$;
	\item $i \colon \atoms \to Q$ is an equivariant function, selecting states for parsing variables;
	\item $o \colon Q \to \two$ is an equivariant characteristic function of final states, which implies that if a state is final, so are all the states in its orbit.
\end{itemize}
The reachability function is the equivariant function given by
\[
	\reach_\mathcal{A}(t) =
	\begin{cases}
		i(t) & \text{if $t \in \atoms$} \\
		\delta_{\mathsf{appl}}(\reach_\mathcal{A}(t_1),\reach_\mathcal{A}(t_2)) & \text{if $t = (t_1,t_2)$} \\
		\delta_{\mathsf{lambda}}(\langle a \rangle \reach_{\mathcal{A}}(t') ) & \text{if $t = \langle a \rangle t'$.}
	\end{cases}
\]
The most interesting case is the last one: in order to parse the $\alpha$-equivalence class $\langle a \rangle t'$, we first parse any tree $t'$ such that $\langle a \rangle t'$ is in the class, and then we take the resulting state up to $\alpha$-conversion w.r.t.\ $\langle a \rangle$.
Note that $\lang(\mathcal{A})$ is equivariant, i.e., invariant under permutations of atoms.
Thus $\mathcal{A}$ recognises $\lambda$-trees up to bijective renamings of variables.

\section{Minimisation}%
\label{sec:minimisation}

In this section we define a construction that
allows to minimise a given tree automaton.
We start with a few basic preliminary notions related to quotients and factorisation systems.

\subparagraph*{Factorisations}
    An \emph{$(\epi,\mono)$-factorisation system} on $\C$ consists of classes of morphisms $\epi$ and $\mono$, closed under composition with isos, such that for every morphism $f$ in $\C$ there exist $e \in \epi$ and $m \in \mono$ with $f = m \circ e$, and we have a unique diagonal fill-in property.

	We list a few properties of factorisation systems.
	First, both $\epi$ and $\mono$ are closed under composition.
	Furthermore, if $g \circ f \in \epi$ and $f \in \epi$, then $g \in \epi$.
	Lastly, if $\epi$ consists of epimorphisms, then it is closed under cointersections, i.e.,
	wide pushouts of epimorphisms~\cite{cats}.
	A functor $\Sigma$ is said to \emph{preserve $\epi$-cointersections} if it preserves
	wide pushouts of epimorphisms in $\epi$.
	In that case, for an epimorphism $e$, if $e \in \epi$ then $\Sigma e$ is again an epimorphism.

\subparagraph*{Quotients}
    Define by $\leq$ the order on morphisms with common domain given by $f \leq g$ iff $\exists h. g = h \circ f$.
    This induces an equivalence relation on such morphisms.
    A \emph{quotient} of an object $X$ is an epimorphism $q \colon  X \twoheadrightarrow X'$ identified up to the equivalence, i.e., an equivalence class.
	We denote by $\Quot(X)$ the class of all quotients of $X$.
	The underlying category $\C$ is said to be \emph{cowellpowered} if $\Quot(X)$ is a set for every $X$.
    In that case, if $\C$ is cocomplete, $\Quot(X)$ forms a complete lattice, with the order given by $\leq$, and the least upper bound (join) given by cointersection.
	We denote by $\Quot_\epi(X)$ the set of quotients of $X$ that are in $\epi$.
	(This is well-defined because $\epi$ is closed under isomorphisms.)

	\begin{assumption}\label{as:min}
Throughout this section, $\C$ is cocomplete and cowellpowered.
Moreover, we fix an $(\epi, \mono)$-factorisation system in $\C$, where $\epi$ contains epimorphisms only.
\end{assumption}

\begin{remark}
The category $\Set$ is cocomplete and cowellpowered, and so is $\Nom$ introduced in Section~\ref{sec:nom-tree}.
In general, the existence of an (epi, strong mono)-factorisation system already follows from $\C$
being cocomplete and cowellpowered~\cite{borceux1994}. Allowing a more general choice of factorisation system
will be useful in Section~\ref{sec:nerode}, where we work with a different $\epi$.
\end{remark}

	Let $(Q, \delta)$ be a $\Sigma$-algebra.
	A \emph{quotient algebra} is a $\Sigma$-algebra $(Q', \delta')$ together with a quotient $q \colon Q \twoheadrightarrow Q'$
	in $\epi$ that is an algebra homomorphism.
    Given a $\Sigma$-tree automaton $(Q, \delta, i, o)$,
a \emph{quotient automaton} is a  $\Sigma$-tree automaton $(Q', \delta', i', o')$
together with a quotient $q \colon Q \twoheadrightarrow Q'$ in $\epi$ that is
a homomorphism of automata.


\begin{definition}[Minimisation]\label{def:minimisation}
	The \emph{minimisation} of a $\Sigma$-tree automaton $(Q,\delta, i, o)$ is a quotient automaton $(Q_m, \delta_m, i_m, o_m)$,
	$q \colon Q \twoheadrightarrow Q_m$, such that for any quotient automaton $(Q',\delta',i', o')$,
	$q' \colon Q \twoheadrightarrow Q'$ of $(Q,\delta,i,o)$
	there exists a (necessarily unique) automaton homomorphism $h \colon Q' \twoheadrightarrow Q_m$
	such that $h \circ q' = q$.
\end{definition}
Minimisation is called \emph{minimal reduction} in~\cite{AT89}.
Note that the morphism $h$ in the definition of minimisation is in $\epi$, since $q'$ and $q$ are.
In the sequel, we sometimes refer to a quotient $q \colon Q \twoheadrightarrow Q_m$
as the minimisation if there exist $\delta_m, i_m, o_m$ turning $(Q_m, \delta_m, i_m, o_m), q$
into the minimisation of $(Q,\delta, i, o)$.

\begin{definition}
	A $\Sigma$-tree automaton $\mathcal{A}$ is said to be \emph{reachable} if the associated reachability map $\reach$ is in $\epi$.
	It is \emph{minimal} if it is reachable and for every reachable $\Sigma$-tree automaton $\mathcal{A}'$ s.t.\ $\lang(\mathcal{A}) = \lang(\mathcal{A}')$ there exists a (necessarily unique) homomorphism from $\mathcal{A}'$ to $\mathcal{A}$.
\end{definition}
The above definition of minimality relies on reachability; a more orthogonal (but equivalent) definition of minimality is explored in Section~\ref{sec:simple}.

We conclude with a few observations on the connection between minimisation and minimality, treated in detail in~\cite{AT89}.
We say $\Sigma$ \emph{admits minimisation} of reachable automata if every reachable automaton over $\Sigma$ has a minimisation.
\begin{lemma}\label{lm:minimal-from-initial}
	An automaton $\mathcal{A}$ is minimal iff it is the minimisation of $(\Sigma^\diamond I, \alpha_I, \eta_I, \mathcal{L}(\mathcal{A}))$.
\end{lemma}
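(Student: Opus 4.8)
The plan is to identify the universal property that defines the minimisation of $\mathcal{N} := (\Sigma^\diamond I, \alpha_I, \eta_I, \mathcal{L}(\mathcal{A}))$ with the one that defines minimality, using the universal property of the free $\Sigma$-algebra $FI = (\Sigma^\diamond I, \alpha_I)$ throughout. The conceptual core is the slogan \emph{``the quotient automata of $\mathcal{N}$ are precisely the reachable automata with language $\mathcal{L}(\mathcal{A})$''}, where the quotient map of such a quotient automaton is forced to be its reachability map.

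First I would collect three facts, each a one-line consequence of the freeness of $\Sigma^\diamond I$. \textbf{(i)} $\reach_{\mathcal{N}} = \eta_I^\sharp = \id_{\Sigma^\diamond I}$, since $\id \circ \eta_I = \eta_I$ forces $\eta_I^\sharp = \id$ by uniqueness; hence $\mathcal{N}$ is reachable ($\id \in \epi$, as $\epi$ is closed under composition with isos) and $\mathcal{L}(\mathcal{N}) = \mathcal{L}(\mathcal{A})$. \textbf{(ii)} \emph{Rigidity:} if $q\colon\Sigma^\diamond I\to Q'$ is an algebra homomorphism that is also an automaton homomorphism from $\mathcal{N}$ to an automaton $\mathcal{A}'=(Q',\delta',i',o')$, then $q\circ\eta_I = i'$, so $q = (i')^\sharp = \reach_{\mathcal{A}'}$; conversely, for \emph{any} automaton $\mathcal{A}'$, the map $\reach_{\mathcal{A}'}$ is an algebra homomorphism with $\reach_{\mathcal{A}'}\circ\eta_I = i'$, and it preserves outputs exactly when $o'\circ\reach_{\mathcal{A}'} = \mathcal{L}(\mathcal{A}')$ equals $\mathcal{L}(\mathcal{N}) = \mathcal{L}(\mathcal{A})$. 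Combining, $(\mathcal{A}',q')$ is a quotient automaton of $\mathcal{N}$ if and only if $q' = \reach_{\mathcal{A}'} \in \epi$ (i.e.\ $\mathcal{A}'$ is reachable) and $\mathcal{L}(\mathcal{A}') = \mathcal{L}(\mathcal{A})$. \textbf{(iii)} \emph{Glue lemma:} any automaton homomorphism $h\colon\mathcal{B}_1\to\mathcal{B}_2$ satisfies $h\circ\reach_{\mathcal{B}_1} = \reach_{\mathcal{B}_2}$, because $h\circ\reach_{\mathcal{B}_1}$ is an algebra homomorphism with $(h\circ\reach_{\mathcal{B}_1})\circ\eta_I = h\circ i_1 = i_2$, hence equals $i_2^\sharp = \reach_{\mathcal{B}_2}$; in particular, if $\mathcal{B}_1$ is reachable there is at most one such $h$.

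With these in hand the two directions are bookkeeping. For ($\Rightarrow$), assume $\mathcal{A}$ is minimal; then it is reachable, so by (ii) the pair $(\mathcal{A},\reach_{\mathcal{A}})$ is a quotient automaton of $\mathcal{N}$. Given any quotient automaton $(\mathcal{A}',q')$ of $\mathcal{N}$, fact (ii) gives $q' = \reach_{\mathcal{A}'}$, that $\mathcal{A}'$ is reachable, and that $\mathcal{L}(\mathcal{A}') = \mathcal{L}(\mathcal{A})$; minimality yields a homomorphism $h\colon\mathcal{A}'\to\mathcal{A}$, and the glue lemma gives $h\circ q' = \reach_{\mathcal{A}}$ plus uniqueness of $h$, while the closure property of $\epi$ from Section~\ref{sec:minimisation} (applied to $h \circ q'$ and $q'$) places $h$ in $\epi$. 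Hence $\mathcal{A}$ is the minimisation of $\mathcal{N}$. For ($\Leftarrow$), assume $\mathcal{A}$, with quotient $q$, is the minimisation of $\mathcal{N}$; then $q$ is an automaton homomorphism $\mathcal{N}\to\mathcal{A}$, so $q = \reach_{\mathcal{A}}$ by (ii), and $q\in\epi$ shows $\mathcal{A}$ is reachable. Given any reachable $\mathcal{A}'$ with $\mathcal{L}(\mathcal{A}')=\mathcal{L}(\mathcal{A})$, fact (ii) makes $(\mathcal{A}',\reach_{\mathcal{A}'})$ a quotient automaton of $\mathcal{N}$, so the minimisation property provides a homomorphism $\mathcal{A}'\to\mathcal{A}$, which is exactly the (unique, by the glue lemma) homomorphism required by minimality.

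The only point that needs care --- and the ``main obstacle'' here is mild --- is that the two universal properties are indexed over different data: minimisation quantifies over \emph{quotient automata equipped with a quotient map} and demands a fill-in compatible with those maps, whereas minimality quantifies over \emph{reachable automata} and demands only a bare homomorphism. Fact (ii) (rigidity) and fact (iii) (the glue lemma) are precisely what reconcile these two formulations; everything else is repeated invocation of the universal property of $FI$.
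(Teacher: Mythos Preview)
The paper states this lemma without proof, so there is no argument to compare against. Your proposal is correct and complete: the three facts you isolate --- that $\reach_{\mathcal{N}} = \id$, that quotient automata of $\mathcal{N}$ correspond exactly to reachable automata with language $\mathcal{L}(\mathcal{A})$ via their reachability maps, and that automaton homomorphisms commute with reachability maps --- are precisely what is needed, and each follows from the universal property of $FI$ as you claim. The only remark is that your aside about $h \in \epi$ in the $(\Rightarrow)$ direction is unnecessary: Definition~\ref{def:minimisation} only asks for an automaton homomorphism $h$ with $h \circ q' = q$, and the paper itself observes separately (just after the definition) that $h \in \epi$ then follows automatically.
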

\begin{lemma}\label{lm:minimal-minimisation}
The functor $\Sigma$ admits minimisation of reachable automata if and only if a minimal automaton exists for every language over $\Sigma$.
In that case, if $\mathcal{A}$ is reachable, then the minimisation of $\mathcal{A}$ is minimal.
\end{lemma}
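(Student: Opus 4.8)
The plan is to route everything through Lemma~\ref{lm:minimal-from-initial}, which identifies minimal automata with minimisations of the canonical automaton $\mathcal{A}_L := (\Sigma^\diamond I, \alpha_I, \eta_I, L)$. First I would record two preliminary facts. The canonical automaton $\mathcal{A}_L$ is reachable for every language $L$: its reachability map is the transpose $\eta_I^\sharp$, which by the universal property of the free algebra is the identity on $FI$, hence lies in $\epi$; and plainly $\lang(\mathcal{A}_L)=L$. I would also isolate and reuse: (i) any automaton homomorphism $f\colon \mathcal{B}\to\mathcal{B}'$ satisfies $f\circ\reach_{\mathcal{B}}=\reach_{\mathcal{B}'}$, because $f\circ i_{\mathcal{B}}^\sharp$ is a $\Sigma$-algebra homomorphism $FI\to\mathcal{B}'$ whose transpose is $i_{\mathcal{B}'}$ and such extensions are unique; and (ii) automaton homomorphisms preserve the language, which is immediate from (i) together with $o_{\mathcal{B}'}\circ f = o_{\mathcal{B}}$.

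For the forward implication, assume $\Sigma$ admits minimisation of reachable automata and fix a language $L$. Then $\mathcal{A}_L$ is reachable, so it has a minimisation $(\mathcal{M},q)$; by (ii) we still have $\lang(\mathcal{M})=\lang(\mathcal{A}_L)=L$, and then by Lemma~\ref{lm:minimal-from-initial} the automaton $\mathcal{M}$ is minimal. Hence a minimal automaton exists for every language over $\Sigma$.

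For the converse, and simultaneously for the ``in that case'' clause, let $\mathcal{A}=(Q,\delta,i,o)$ be reachable, put $L=\lang(\mathcal{A})$, and let $\mathcal{M}$ be a minimal automaton with $\lang(\mathcal{M})=L$. Applying minimality of $\mathcal{M}$ to the reachable automaton $\mathcal{A}$ (which has language $L$) yields a unique automaton homomorphism $h\colon\mathcal{A}\to\mathcal{M}$. By (i), $h\circ\reach_{\mathcal{A}}=\reach_{\mathcal{M}}$, and since $\reach_{\mathcal{A}}$ and $\reach_{\mathcal{M}}$ both lie in $\epi$ (both automata are reachable), the cancellation property of the factorisation system ($g\circ f\in\epi$ and $f\in\epi$ imply $g\in\epi$) forces $h\in\epi$; thus $(\mathcal{M},h)$ is a quotient automaton of $\mathcal{A}$. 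To see it is the minimisation, take any quotient automaton $q'\colon Q\twoheadrightarrow Q'$ of $\mathcal{A}$: by (i), $\reach_{Q'}=q'\circ\reach_{\mathcal{A}}$ is a composite of maps in $\epi$, so $Q'$ is reachable, and by (ii) together with $o'\circ q'=o$ its language is $L$; minimality of $\mathcal{M}$ then supplies a unique homomorphism $g\colon Q'\to\mathcal{M}$, and $g\circ q'=h$ since both are homomorphisms $\mathcal{A}\to\mathcal{M}$ and $h$ is the unique such one. Uniqueness of the fill-in $g$ follows because $q'$ is an epimorphism. Hence every reachable automaton has a minimisation, namely $\mathcal{M}$, which is minimal; since minimisations are unique up to isomorphism and minimality is invariant under isomorphism of automata, the minimisation of \emph{any} reachable automaton is minimal, which also settles the final clause.

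The computations here --- facts (i) and (ii) and the reachability/language identities for $q'$ --- are short diagram chases with the free-algebra adjunction and the automaton-homomorphism laws, so I do not expect them to be the difficulty. The one step that genuinely requires the hypotheses is showing that $h$ lands in $\epi$: without it, $\mathcal{M}$ would merely be a homomorphic image of $\mathcal{A}$ rather than a \emph{quotient} in the sense of Definition~\ref{def:minimisation}, and the universal property would be out of reach. This is exactly where reachability of both $\mathcal{A}$ and the minimal automaton $\mathcal{M}$, and the left-cancellation property of $\epi$, are indispensable, and I expect it to be the main obstacle, or at least the step most in need of care.
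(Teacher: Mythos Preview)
Your proof is correct and follows the same route as the paper's: the forward direction goes through Lemma~\ref{lm:minimal-from-initial} applied to the reachable automaton $\mathcal{A}_L$, and the converse shows that the minimal automaton for $\lang(\mathcal{A})$ serves as the minimisation of a reachable $\mathcal{A}$, with the final clause following by uniqueness of minimisations. The paper's proof is extremely terse (``one readily shows\ldots''), and what you have written is precisely the unpacking of that phrase, including the crucial step---which the paper leaves implicit---that $h\in\epi$ via left cancellation from $h\circ\reach_{\mathcal{A}}=\reach_{\mathcal{M}}$.
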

\begin{proof}
	For the equivalence, the implication left to right follows from Lemma~\ref{lm:minimal-from-initial}.
    For the converse, one readily shows that the minimisation of an automaton $\mathcal{A}$ is given by the minimal automaton accepting $\lang(\mathcal{A})$.
	The second statement holds by uniqueness of minimisations.
\end{proof}

\subsection{Minimisation via the cobase}

We show how to compute the minimisation of a given automaton $(Q,\delta,i,o)$ using the so-called \emph{cobase}~\cite{alwin}.
This is the dual of the base, which is used in~\cite{BKR19,reachability19} for reachability of coalgebras.
The cobase allows us to characterise the minimisation as the greatest
fixed point of a certain monotone operator on $\Quot_\epi(Q)$,
which is a complete lattice by Assumption~\ref{as:min}.

\begin{definition}
    Let $f \colon \Sigma X \rightarrow Y$ be a morphism.
	The \emph{($\epi$)-cobase} of $f$ (if it exists) is the greatest
	quotient $q \in \Quot_\epi(X)$ such that there exists a morphism $g$
	with $g \circ \Sigma q = f$.
\end{definition}
A concrete instance of the cobase will be given below in Example~\ref{ex:base-and-theta}.
The cobase can be computed as the join of all quotients satisfying the relevant condition, provided that the functor preserves cointersections.

\begin{theorem}[Existence of cobases]\label{thm:cobase}
	Suppose $\Sigma \colon \C \rightarrow \C$
	preserves $\epi$-cointersections. Then every map $f \colon \Sigma X \rightarrow Y$
	has an $\epi$-cobase, given by the cointersection
	\[
        \bigvee \{q \in \Quot_\epi(X) \mid \exists g. \, g \circ \Sigma q = f\} \,.
	\]
\end{theorem}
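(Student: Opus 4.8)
The plan is to show that the candidate quotient
$\bar q := \bigvee \{q \in \Quot_\epi(X) \mid \exists g.\, g \circ \Sigma q = f\}$
itself satisfies the defining condition of the cobase, i.e. that there exists $\bar g$ with $\bar g \circ \Sigma \bar q = f$; since $\bar q$ is by construction an upper bound of every $q$ with this property, this makes it the \emph{greatest} such quotient, which is exactly the cobase. The join exists because, by Assumption~\ref{as:min}, $\Quot_\epi(X)$ is a complete lattice, with joins computed as cointersections (wide pushouts of epimorphisms). Call the indexing family $(q_j \colon X \twoheadrightarrow X_j)_{j \in J}$, with accompanying morphisms $g_j \colon \Sigma X_j \to Y$ satisfying $g_j \circ \Sigma q_j = f$.

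First I would apply $\Sigma$ to the cointersection cocone. Since $\Sigma$ preserves $\epi$-cointersections, the image under $\Sigma$ of the wide pushout defining $\bar q$ is again a wide pushout: $\Sigma \bar q \colon \Sigma X \to \Sigma \bar X$ together with the maps $\Sigma(u_j) \colon \Sigma X_j \to \Sigma \bar X$ (where $u_j \colon X_j \to \bar X$ are the pushout injections, so $u_j \circ q_j = \bar q$) form a colimit cocone on the same diagram. Next I would check that the family $(g_j)_{j \in J}$ is a cocone over that diagram with vertex $Y$: for any connecting morphism $t \colon X_j \to X_k$ in the diagram (so $t \circ q_j = q_k$), we need $g_k \circ \Sigma t = g_j$. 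This follows because $g_k \circ \Sigma t \circ \Sigma q_j = g_k \circ \Sigma q_k = f = g_j \circ \Sigma q_j$, and $\Sigma q_j$ is an epimorphism --- here I use that $\epi$ consists of epimorphisms and, via preservation of cointersections, that $\Sigma q_j$ is epic (as noted in the discussion of factorisations). By the universal property of the pushout $\Sigma \bar X$, the cocone $(g_j)$ factors uniquely through a morphism $\bar g \colon \Sigma \bar X \to Y$ with $\bar g \circ \Sigma(u_j) = g_j$ for all $j$.

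It then remains to verify $\bar g \circ \Sigma \bar q = f$. Pick any index $j$ (the family is nonempty, since the identity quotient paired with $f$ itself witnesses membership). Then $\bar g \circ \Sigma \bar q = \bar g \circ \Sigma(u_j) \circ \Sigma q_j = g_j \circ \Sigma q_j = f$, using functoriality of $\Sigma$ on $u_j \circ q_j = \bar q$. Hence $\bar q$ satisfies the cobase condition, and being the join of all quotients that do, it is the greatest such; thus $\bar q$ is the $\epi$-cobase of $f$.

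The main obstacle is the cocone-compatibility step: one has to be slightly careful that the maps $g_j$ genuinely glue along \emph{all} connecting morphisms of the cointersection diagram, not merely that each individually factors $f$. The cancellation argument above handles this cleanly once we know $\Sigma$ sends the $\epi$-quotients to epimorphisms, which is precisely where the hypothesis that $\Sigma$ preserves $\epi$-cointersections is used twice: once to get that $\Sigma$ of the colimit cocone is again a colimit, and once (as a consequence) to get epicness of the relevant $\Sigma q_j$. Everything else is a routine diagram chase with the universal property.
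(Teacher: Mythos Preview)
Your argument is correct and is precisely the standard proof that the paper defers to by citation (the dual appears in the referenced works on bases for coalgebras, and the paper merely notes that it goes through using closure of $\epi$ under cointersections). One harmless redundancy: the compatibility check for ``connecting morphisms $t \colon X_j \to X_k$'' is vacuous, since a cointersection is the colimit of a wide span whose only non-identity arrows are the $q_j \colon X \to X_j$, so the sole cocone condition to verify is $g_j \circ \Sigma q_j = f$, which you already have by hypothesis.
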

\begin{proof}
	For $\epi$ the class of all epis, the dual is shown in~\cite{BKR19,reachability19}.
	The proof goes through in the current, more general setting, using
	that $\epi$ is closed under cointersections.
\end{proof}

\begin{remark}
A $\Set$ functor preserves cointersections iff it is finitary~\cite{AT89}. In particular, this is the case for polynomial functors. For $\Nom$ functors we can use that, in general, a functor preserves cointersections if it is finitary and preserves reflexive coequalisers~\cite{AT89}. These conditions hold for polynomial $\Nom$ functors introduced in Section~\ref{sec:nom-tree}, because they preserve sifted colimits~\cite{KurzP10}, which include filtered colimits and reflexive coequalisers.
\end{remark}

We now define an operator on quotients of the state space of an automaton,
which characterises the minimisation of an automaton and gives a way of computing it.
To this end, given a $\Sigma$-algebra $(Q,\delta)$ and a quotient $q \colon Q \twoheadrightarrow Q' \in \Quot_\epi(Q)$,
define the quotient
$\Theta_\delta(q) \colon Q \twoheadrightarrow \Theta_\delta(Q')$
as the cobase of $q \circ \delta$.
This defines a monotone operator $\Theta_\delta \colon \Quot_\epi(Q) \rightarrow \Quot_\epi(Q)$ that has the following important property (see~\cite{BKR19,reachability19}):
\begin{lemma}\label{lm:theta}
	Suppose $\Sigma$ preserves $\epi$-cointersections.
	For any $\Sigma$-algebra $(Q, \delta)$,
	a quotient $q \colon Q \twoheadrightarrow Q'$ in $\Quot_\epi(Q)$
	satisfies $q \leq \Theta_\delta(q)$ iff there is an
	algebra structure $\delta' \colon \Sigma Q' \rightarrow Q'$
	turning $q$ into an algebra homomorphism.
\end{lemma}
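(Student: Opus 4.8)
The plan is to unfold the definition of $\Theta_\delta(q)$ as the cobase of the morphism $q \circ \delta \colon \Sigma Q \to Q'$ and exploit its universal property: $\Theta_\delta(q)$ is the \emph{greatest} quotient $p \in \Quot_\epi(Q)$ for which there exists a morphism $g$ with $g \circ \Sigma p = q \circ \delta$. We use here that $\Sigma$ preserves $\epi$-cointersections, so that this cobase exists by Theorem~\ref{thm:cobase} and $\Theta_\delta$ is well-defined. The claim then reduces to two observations: that $q \leq \Theta_\delta(q)$ holds iff $q$ itself lies in the class of quotients $p$ satisfying the factorisation condition, and that this factorisation condition for $q$ is literally the statement that $q$ is a $\Sigma$-algebra homomorphism for a suitable structure on $Q'$.

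For the ``if'' direction, suppose $\delta' \colon \Sigma Q' \to Q'$ makes $q$ into a homomorphism, i.e.\ $\delta' \circ \Sigma q = q \circ \delta$. Then $q$ is a quotient in $\Quot_\epi(Q)$ with $g := \delta'$ witnessing $g \circ \Sigma q = q \circ \delta$, so $q$ belongs to the set of quotients of which $\Theta_\delta(q)$ is the greatest element; hence $q \leq \Theta_\delta(q)$.

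For the ``only if'' direction, suppose $q \leq \Theta_\delta(q)$, say $\Theta_\delta(q) = h \circ q$ for some $h$. By definition of the cobase there is a morphism $g$ with $g \circ \Sigma\,\Theta_\delta(q) = q \circ \delta$; substituting $\Theta_\delta(q) = h \circ q$ and using functoriality of $\Sigma$ yields $(g \circ \Sigma h) \circ \Sigma q = q \circ \delta$. Setting $\delta' := g \circ \Sigma h \colon \Sigma Q' \to Q'$, this equation says precisely that $q \colon (Q,\delta) \to (Q',\delta')$ is a $\Sigma$-algebra homomorphism.

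Both directions are routine diagram manipulations, so the only real content is bookkeeping: checking that the codomain of the witnessing morphism $g$ may be taken to be $Q'$ (which is automatic, since $q \circ \delta$ has codomain $Q'$), and ensuring that $\Theta_\delta$ is defined in the first place, which is exactly where the hypothesis that $\Sigma$ preserves $\epi$-cointersections is needed. I do not anticipate any genuine obstacle beyond this.
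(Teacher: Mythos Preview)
Your argument is correct. The paper does not actually supply a proof of this lemma; it simply cites~\cite{BKR19,reachability19} for the dual statement and moves on. Your unfolding of the cobase as the greatest quotient $p$ admitting a factorisation $g \circ \Sigma p = q \circ \delta$, together with the observation that the homomorphism condition for $q$ is exactly this factorisation with $p = q$, is precisely the intended content, and your ``only if'' direction via $\Theta_\delta(q) = h \circ q$ and $\delta' := g \circ \Sigma h$ is the standard way to extract the algebra structure. There is nothing to add.
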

The operator $\Theta_\delta$ allows us to quotient the transition structure of the automaton. In order to obtain the minimal automaton, we incorporate the output map $o \colon Q \rightarrow O$ into the construction of a monotone operator based on $\Theta_\delta$.
For technical convenience, we assume that this map is an element of $\Quot_\epi(Q)$.\footnote{This is not a real
restriction: one can just pre-process the automaton by factorising $o$, i.e., keeping only those outputs actually occurring in the automaton.}
The relevant monotone operator for minimisation is $\Theta_\delta \wedge o$ (where the meet $\wedge$ is taken pointwise in $\Quot_\epi(Q)$).
\begin{example}\label{ex:base-and-theta}
Let $\Sigma \colon \Set \rightarrow \Set$ be a polynomial functor induced by signature $\Gamma$.
We first spell out what the cobase means concretely in this case and then study the operator $\Theta_\delta$ in more detail.
Since $\Sigma$ is an endofunctor on $\Set$, the cobase of a map $f \colon \Sigma X \rightarrow Y$
is the largest quotient $q \in \Quot_\epi(X)$
such that for all $t,t' \in \Sigma X$:
\[
    \text{ if } \Sigma q(t) = \Sigma q(t'), \text{ then } f(t) = f(t') \,.
\]
This means that for every $\gamma \in \Gamma$ with $k = \arity(\gamma)$, and any $x_1, \ldots, x_k, y_1, \ldots y_k$, we have that
\[
    \inferrule{%
        q(x_1) = q(y_1)
        \and \dots \and
        q(x_k) = q(y_k)
    }{%
		f(\kappa_\gamma(x_1, \ldots, x_k)) = f(\kappa_\gamma(y_1, \ldots, y_k))
    }
\]
or equivalently that for all $x_1, \ldots, x_k$ and $x_i'$ with $1 \leq i \leq k$ we have
\[
    \inferrule{%
        q(x_i) = q(x_i')
    }{%
		f(\kappa_\gamma(x_1, \ldots, x_{i-1}, x_i, x_{i+1},\ldots, x_k)) = f(\kappa_\gamma(x_1, \ldots, x_{i-1}, x_i', x_{i+1},\ldots, x_k))
    }
\]

\noindent
Suppose $(Q, \delta, i, o)$ is an automaton.
For $q \in \Quot(Q)$, we have $q \leq \Theta_\delta(q) \wedge o$ iff
\begin{itemize}
\item for all $x,x' \in Q$: if $q(x) = q(x')$, then $o(x) = o(x')$; and
\item for all $\gamma \in \Gamma$ with $k = \arity(\gamma)$, and $x_1, \ldots, x_k$ and $x_i'$ with $1 \leq i \leq k$ we have
\[
    \inferrule{%
        q(x_i) = q(x_i')
    }{%
        q(\delta_\gamma(x_1, \ldots, x_{i-1}, x_i, x_{i+1},\ldots, x_k)) = q(\delta_\gamma(x_1, \ldots, x_{i-1}, x_i', x_{i+1},\ldots, x_k))
    }
\]
\end{itemize}
A partition $q$ with the above two properties is known as a \emph{forward bisimulation}~\cite{HogbergMM09}.
\end{example}

\begin{theorem}\label{thm:gfp}
	Suppose $\Sigma$ preserves $\epi$-cointersections. Let $(Q,\delta,i,o)$ be an automaton,
	where $o \in \Quot_\epi(Q)$. Then $\gfp(\Theta_\delta \wedge o)$ is the
	minimisation of $(Q,\delta,i,o)$.
\end{theorem}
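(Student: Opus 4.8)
The plan is to identify the post-fixed points of $\Theta_\delta \wedge o$ in the complete lattice $\Quot_\epi(Q)$ with (the underlying quotients of) quotient automata of $(Q,\delta,i,o)$, and then to argue that the greatest such quotient --- which exists and equals $\gfp(\Theta_\delta \wedge o)$ by the Knaster--Tarski theorem --- carries the structure of a minimisation. First I would record that $\Theta_\delta \wedge o$ is a well-defined monotone endofunction on $\Quot_\epi(Q)$, which is a complete lattice by Assumption~\ref{as:min}: the cobases defining $\Theta_\delta$ exist by Theorem~\ref{thm:cobase} since $\Sigma$ preserves $\epi$-cointersections, $\Theta_\delta$ is monotone, and meeting with the fixed element $o \in \Quot_\epi(Q)$ is monotone. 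Hence $q^\star := \gfp(\Theta_\delta \wedge o)$ exists and, by Knaster--Tarski, is the greatest post-fixed point, $q^\star = \bigvee\{q \in \Quot_\epi(Q) \mid q \leq \Theta_\delta(q) \wedge o\}$.

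The key step is to unfold what a post-fixed point is. Since $\wedge$ is the meet, $q \leq \Theta_\delta(q) \wedge o$ holds iff $q \leq \Theta_\delta(q)$ and $q \leq o$. By Lemma~\ref{lm:theta}, the first conjunct is equivalent to the existence of an algebra structure $\delta'$ on the codomain $Q'$ of $q$ turning $q$ into a $\Sigma$-algebra homomorphism; the second, $q \leq o$, says precisely that $o = o' \circ q$ for some $o' \colon Q' \to O$. Setting $i' := q \circ i$, this shows that a quotient $q$ is a post-fixed point of $\Theta_\delta \wedge o$ if and only if it underlies a quotient automaton $(Q',\delta',i',o')$, $q \colon Q \twoheadrightarrow Q'$, of $(Q,\delta,i,o)$.

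In particular $q^\star$, being a fixed point, underlies a quotient automaton $(Q_m,\delta_m,i_m,o_m)$, and I would take this as the candidate minimisation. It then remains to verify the universal property of Definition~\ref{def:minimisation}. Given any quotient automaton $(Q',\delta',i',o')$, $q' \colon Q \twoheadrightarrow Q'$, its underlying quotient is a post-fixed point, so $q' \leq q^\star$: there is a morphism $h \colon Q' \to Q_m$ with $h \circ q' = q^\star$, necessarily unique since $q'$ is epi, and $h \in \epi$ by the closure property that $g \circ f \in \epi$ and $f \in \epi$ imply $g \in \epi$. Finally I would check $h$ is a homomorphism of automata: $h \circ i' = h \circ q' \circ i = q^\star \circ i = i_m$; from $o_m \circ h \circ q' = o_m \circ q^\star = o = o' \circ q'$ and $q'$ epi we get $o_m \circ h = o'$; and postcomposing both $\delta_m \circ \Sigma h$ and $h \circ \delta'$ with $\Sigma q'$ and using that $q'$ and $q^\star$ are algebra homomorphisms shows the two agree after $\Sigma q'$, whence $\delta_m \circ \Sigma h = h \circ \delta'$.

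The only delicate point is this last cancellation: concluding that $h$ is a $\Sigma$-algebra homomorphism requires $\Sigma q'$ to be an epimorphism, which is exactly where the hypothesis that $\Sigma$ preserves $\epi$-cointersections is used a second time (beyond guaranteeing that cobases, and hence $\Theta_\delta$, exist at all). Everything else reduces to Knaster--Tarski, Lemma~\ref{lm:theta}, and routine bookkeeping with the closure properties of the factorisation system.
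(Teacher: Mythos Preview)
Your argument is correct and mirrors the paper's proof: both identify post-fixed points of $\Theta_\delta \wedge o$ with quotient automata via Lemma~\ref{lm:theta}, take the greatest one as the minimisation, and verify that the mediating map $h$ is an automaton homomorphism by cancelling the epimorphism $\Sigma q'$. (One terminological slip: in the last step you \emph{precompose} $\delta_m \circ \Sigma h$ and $h \circ \delta'$ with $\Sigma q'$, not postcompose.)
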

\begin{proof}
	Denote the quotient $\gfp(\Theta_\delta \wedge o)$ by $q_m \colon Q \twoheadrightarrow Q_m$.
	Thus $q_m \leq \Theta_\delta(q_m)$ and $q_m \leq o$, hence (using Lemma~\ref{lm:theta})
	there exist $\delta_m, o_m$ turning $q_m$ into an automaton homomorphism from $(Q,\delta,i,o)$ to $(Q_m, \delta_m,q_m \circ i, o_m)$.
	We show that this is the minimisation of $(Q,\delta,i,o)$.

	To this end, let $(Q', \delta', i', o')$, $q' \colon Q \twoheadrightarrow Q'$ be a quotient automaton of $(Q,\delta, i, o)$.
	By Lemma~\ref{lm:theta} we get $q' \leq \Theta_\delta(q')$, and since $o' \circ q' = o$ we have $q' \leq o$, hence
	$q' \leq \Theta_\delta(q') \wedge o$. Thus $q' \leq \gfp(\Theta_\delta \wedge o)$, i.e., there is a quotient $h \colon Q' \twoheadrightarrow Q_m$
	such that $h \circ q' = q_m$. It only remains to show that $h$ is a homomorphism of automata.
	First, since $q' \in \epi$ and $\Sigma$ preserves $\epi$-cointersections,
	$\Sigma q'$ is an epimorphism. Combined with the fact that $q'$ and $q_m$ are algebra homomorphisms
	and that $h \circ q' = q_m$, it easily follows that $h$ is an algebra homomorphism. To see that it preserves the
	output, we have
	$
	o_m \circ h \circ q' = o_m \circ q_m = o = o' \circ q'
	$;
	hence, since $q'$ is epic, we get $o_m \circ h = o'$.
	For preservation of the input, we have $h \circ i' = h \circ q' \circ i = q_m \circ i$, where
	the first step holds because $q'$ is a homomorphism of automata.
\end{proof}

The above characterisation of minimisation of an automaton $(Q,\delta,i,o)$
gives us two ways of constructing it by standard lattice-theoretic computations.
First, via the Knaster-Tarski theorem, we obtain it as the join of all
post-fixed points of $\Theta_\delta \wedge o$, which, by Lemma~\ref{lm:theta},
amounts to the join of all quotient algebras respecting the output map $o$.
That corresponds to the construction in~\cite{AT89}.
Second, and perhaps most interestingly, we obtain the minimisation of $(Q,\delta,i,o)$
by iterating $\Theta_\delta \wedge o$, starting from the top element $\top$ of
the lattice $\Quot_\epi(Q)$. The latter construction
is analogous to the classical partition refinement algorithm: Starting from $\top$ corresponds to
identifying all states as equivalent (or in other words, starting from the coarsest equivalence class of states). Every iteration step of $\Theta_\delta \wedge o$ splits the states that can be distinguished successively by just outputs, trees of depth $1$, trees of depth $2$, etc.
If the state space is finite, this construction terminates, yielding the
minimisation of the original automaton by Theorem~\ref{thm:gfp}.

\subsection{Simple automata}%
\label{sec:simple}

We defined an automaton to be minimal if it is reachable and satisfies a universal property w.r.t.\ reachable automata accepting the same language.
It is also interesting to ask whether there is another property that, together with reachability, implies minimality, but is not itself dependent on reachability~\cite{AM75}.
Here we propose precisely such a condition.
\begin{definition}
	An automaton $(Q, \delta, i, o)$ is called \emph{simple}
	if for every quotient automaton $(Q', \delta', i', o')$ the associated quotient
	$q \colon Q \twoheadrightarrow Q'$ is an isomorphism.
\end{definition}
The result below asserts that minimal automata are precisely the automata that are simple and reachable.
It can be seen as a refinement (and dual) of~\cite[Theorem 17]{BKR19}, computing the reachable part of a coalgebra.
One of the implications makes use of Theorem~\ref{thm:gfp}, so we assume that $\Sigma$ preserves $\epi$-cointersections.

\begin{proposition}\label{prop:minimisation-simple}
	Suppose $\Sigma$ preserves $\epi$-cointersections.
	Let $(Q,\delta,i,o)$ be an automaton with $o \in \epi$, and let $(Q', \delta',i',o')$, $q \colon Q \twoheadrightarrow Q'$
	be a quotient automaton. Then $(Q', \delta', i', o')$ is simple if and only if
	it is the minimisation of $(Q, \delta, i, o)$.
\end{proposition}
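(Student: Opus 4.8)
The plan is to prove both implications by relating simplicity to the fixed-point characterisation of minimisation from Theorem~\ref{thm:gfp}. The key observation is that being simple is exactly the statement that the only post-fixed point of $\Theta_{\delta'} \wedge o'$ on $\Quot_\epi(Q')$ is the top element $\top_{Q'}$, i.e.\ that the minimisation of $(Q',\delta',i',o')$ is the identity quotient. So the heart of the argument is to show that the minimisation of a quotient automaton of $(Q,\delta,i,o)$ is ``the same'' as the minimisation of $(Q,\delta,i,o)$ itself.

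For the direction where $q \colon Q \twoheadrightarrow Q'$ is the minimisation, I would argue that $(Q',\delta',i',o')$ is simple directly: given any quotient automaton $r \colon Q' \twoheadrightarrow Q''$, the composite $r \circ q \colon Q \twoheadrightarrow Q''$ is again a quotient automaton of $(Q,\delta,i,o)$ (using that $\epi$ is closed under composition and that homomorphisms of automata compose), so by the universal property of minimisation there is an automaton homomorphism $h \colon Q'' \twoheadrightarrow Q'$ with $h \circ (r \circ q) = q$. Then $h \circ r$ and $\id_{Q'}$ are both automaton homomorphisms $Q' \to Q'$ with $(h \circ r) \circ q = q = \id_{Q'} \circ q$; since $q$ is epic, $h \circ r = \id_{Q'}$. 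Combined with $r$ being epic, a standard split-epi/split-mono argument gives that $r$ is an isomorphism, so $(Q',\delta',i',o')$ is simple. Note this direction does not even need the assumption that $\Sigma$ preserves $\epi$-cointersections.

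For the converse, suppose $(Q',\delta',i',o')$ is simple; I want to show $q$ is the minimisation of $(Q,\delta,i,o)$. First observe that $o' \in \epi$: since $o = o' \circ q$ with $o \in \epi$ and $q \in \epi$ (so $o' \circ q \in \epi$), and $\epi$ has the cancellation property that $g \circ f \in \epi$ with $f \in \epi$ implies $g \in \epi$ — wait, that's the wrong orientation; instead I use that $q$ is epic together with $o' \circ q = o \in \epi$ and the left-cancellation property stated in the excerpt ($g \circ f \in \epi$, $f \in \epi$ $\Rightarrow$ $g \in \epi$) applied with $f = q$, $g = o'$. Hence $o' \in \epi$, so Theorem~\ref{thm:gfp} applies to $(Q',\delta',i',o')$: its minimisation is $\gfp(\Theta_{\delta'} \wedge o')$ on $\Quot_\epi(Q')$. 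But simplicity says the only quotient automaton of $(Q',\delta',i',o')$ is the identity, and by Lemma~\ref{lm:theta} the post-fixed points of $\Theta_{\delta'} \wedge o'$ are exactly the quotient algebras respecting $o'$, which are the quotient automata; hence $\gfp(\Theta_{\delta'} \wedge o') = \top_{Q'} = \id_{Q'}$, i.e.\ the minimisation of $(Q',\delta',i',o')$ is $\id_{Q'}$. Now let $q_m \colon Q \twoheadrightarrow Q_m$ be the minimisation of $(Q,\delta,i,o)$ given by Theorem~\ref{thm:gfp}; since $q$ is a quotient automaton of $(Q,\delta,i,o)$, there is an automaton homomorphism $g \colon Q_m \twoheadrightarrow$ … in the wrong direction. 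Instead: I need to show $q' \leq q$ and $q \leq q'$ in $\Quot_\epi(Q)$ would give $q$ is the minimisation — more carefully, I show $q$ satisfies the universal property. Given a quotient automaton $p \colon Q \twoheadrightarrow P$ of $(Q,\delta,i,o)$, the join $p \vee q$ in $\Quot_\epi(Q)$ (cointersection) is again a quotient automaton — here is where preservation of $\epi$-cointersections enters, to push the algebra structure through the cointersection, analogously to the proof of Theorem~\ref{thm:gfp}. Then $q \leq p \vee q$, so there is an automaton homomorphism $Q' \twoheadrightarrow (P \vee Q)$ exhibiting a quotient of $Q'$; by simplicity of $Q'$ this is an isomorphism, whence $p \vee q \cong q$, i.e.\ $p \leq q$. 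This gives the required factorisation, and uniqueness is automatic since $p$ is epic.

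I expect the main obstacle to be the bookkeeping in the converse direction: making precise that cointersections of quotient automata are again quotient automata (the ``push the structure through'' step), and correctly threading the equivalence-class nature of quotients so that the inequalities in $\Quot_\epi$ translate back into the existence of the desired automaton homomorphism $h$. The cleanest route is probably to avoid joins altogether and instead argue: by Lemma~\ref{lm:minimal-minimisation} and the first implication, $q$ composed with the minimisation of $(Q',\delta',i',o')$ is the minimisation of $(Q,\delta,i,o)$; since that minimisation of $Q'$ is $\id_{Q'}$ by simplicity, $q$ itself is the minimisation of $(Q,\delta,i,o)$. This reduces the entire converse to the identity ``minimisation of a quotient automaton, precomposed with the quotient, is the minimisation'', which is exactly the content unpacked in the first paragraph's composition argument plus Theorem~\ref{thm:gfp}, and is where I would concentrate the detailed verification.
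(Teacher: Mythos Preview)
Your argument for (minimisation $\Rightarrow$ simple) is correct and matches the paper's.

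For the converse you actually start down the paper's path and then abandon it by mistake. When you write ``let $q_m \colon Q \twoheadrightarrow Q_m$ be the minimisation \ldots\ there is an automaton homomorphism $g \colon Q_m \twoheadrightarrow \ldots$ in the wrong direction'', you have the direction reversed: the universal property of minimisation (Definition~\ref{def:minimisation}) says that for every quotient automaton $q' \colon Q \twoheadrightarrow Q'$ there is $h \colon Q' \twoheadrightarrow Q_m$ with $h \circ q' = q_m$, not a map out of $Q_m$. Applying this to your $q$ gives $h \colon Q' \twoheadrightarrow Q_m$ in $\epi$, so $(Q_m, h)$ is a quotient automaton of $Q'$; simplicity of $Q'$ then forces $h$ to be an iso, whence $q$ represents the same quotient as $q_m$. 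That is the paper's entire argument for this direction---two lines, with no need to apply Theorem~\ref{thm:gfp} to $Q'$, to show $o' \in \epi$, or to form joins.

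Your fallback join argument can be completed (the binary cointersection of two quotient automata is again a quotient automaton when $\Sigma$ preserves $\epi$-cointersections, by the pushout universal property), but it is heavier than needed. The ``cleanest route'' you sketch at the end, however, has a real gap: the transitivity claim that ``$q$ composed with the minimisation of $Q'$ is the minimisation of $Q$'' is not what Lemma~\ref{lm:minimal-minimisation} asserts (that lemma is about reachable automata and minimality, not arbitrary quotient automata), and establishing it directly amounts to reproving the proposition.
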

\begin{proof}
By Theorem~\ref{thm:gfp}, the minimisation of $(Q, \delta, i, o)$ exists.
	We denote it by $q_m \colon Q \twoheadrightarrow Q_m$ and its associated automaton structure by $(Q_m, \delta_m, i_m, o_m)$.

	Suppose $(Q', \delta', i', o')$ is simple.
	Since $Q_m$ is the minimisation of $Q$ and $Q'$ is a quotient automaton of $Q$, there exists a homomorphism of automata
	$h \colon Q' \twoheadrightarrow Q_m$. Since $Q'$ is simple, this homomorphism is an iso.

	Conversely, suppose $(Q', \delta', i', o')$ is the minimisation of $(Q, \delta, i, o)$ and consider any quotient automaton $Q''$ of $Q'$, witnessed by
	some $q' \colon Q' \twoheadrightarrow Q''$. Then $Q''$ is also a quotient automaton of $Q$, via
	$q' \circ q$. Because $Q'$ is the minimisation of $Q$, there exists $k \colon Q'' \to Q'$ such that $k \circ q' \circ q = q$.
	Thus $k \circ q' = \id$, using that $q$ is an epi.
	Since $q' \circ k \circ q' = q'$ and $q'$ is an epi as well, we also have $q' \circ k = \id$.
	Hence $q'$ is an iso, as needed.
\end{proof}

\begin{corollary}
	If $\Sigma$ preserves $\epi$-cointersections, then an automaton $\mathcal{A} = (Q,\delta,i,o)$ with $o \in \epi$ is minimal if and only if
	it is simple and reachable.
\end{corollary}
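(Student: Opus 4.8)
The plan is to obtain this as an immediate consequence of Proposition~\ref{prop:minimisation-simple} together with Lemma~\ref{lm:minimal-from-initial}. The bridge between the two is the observation that a reachable automaton $\mathcal{A} = (Q,\delta,i,o)$ is precisely a quotient automaton of the ``initial'' automaton $\mathcal{A}_0 := (\Sigma^\diamond I, \alpha_I, \eta_I, \lang(\mathcal{A}))$: the reachability map $\reach_\mathcal{A} = U(i^\sharp) \colon \Sigma^\diamond I \to Q$ is an algebra homomorphism by construction, satisfies $\reach_\mathcal{A} \circ \eta_I = i$ by the universal property of the free algebra and $o \circ \reach_\mathcal{A} = \lang(\mathcal{A})$ by definition of the language, and lies in $\epi$ exactly because $\mathcal{A}$ is reachable, so it is a quotient of automata $\reach_\mathcal{A} \colon \mathcal{A}_0 \twoheadrightarrow \mathcal{A}$. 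Moreover, since $\lang(\mathcal{A}) = o \circ \reach_\mathcal{A}$ with $o \in \epi$ and $\reach_\mathcal{A} \in \epi$, and $\epi$ is closed under composition, the output map of $\mathcal{A}_0$ lies in $\epi$; this is the one place where the hypothesis $o \in \epi$ is used, and it is exactly what is needed to apply Proposition~\ref{prop:minimisation-simple} to the pair $(\mathcal{A}_0, \reach_\mathcal{A})$.

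For the forward direction, assume $\mathcal{A}$ is minimal. By definition it is reachable, so the setup above applies; and by Lemma~\ref{lm:minimal-from-initial}, $\mathcal{A}$ (together with the necessarily unique quotient $\reach_\mathcal{A}$) is the minimisation of $\mathcal{A}_0$. Proposition~\ref{prop:minimisation-simple} then yields that $\mathcal{A}$ is simple, so $\mathcal{A}$ is simple and reachable.

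For the converse, assume $\mathcal{A}$ is simple and reachable. Reachability again puts us in the situation above, exhibiting $\mathcal{A}$ as a quotient automaton of $\mathcal{A}_0$, whose output map lies in $\epi$; since $\mathcal{A}$ is simple, Proposition~\ref{prop:minimisation-simple} gives that $\mathcal{A}$ is the minimisation of $\mathcal{A}_0$, and Lemma~\ref{lm:minimal-from-initial} then gives that $\mathcal{A}$ is minimal. I do not anticipate any real obstacle; the only thing to be careful about is checking these minor well-formedness conditions (that $\reach_\mathcal{A}$ is genuinely a homomorphism of automata and that $\mathcal{A}_0$ has output in $\epi$), after which both implications are a direct appeal to the two cited results.
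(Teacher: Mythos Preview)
Your proposal is correct and follows essentially the same route as the paper's proof: both directions go through the ``initial'' automaton $(\Sigma^\diamond I, \alpha_I, \eta_I, \lang(\mathcal{A}))$, use reachability to view $\mathcal{A}$ as a quotient of it (with output in $\epi$ since $\lang(\mathcal{A}) = o \circ \reach_\mathcal{A}$), and then combine Lemma~\ref{lm:minimal-from-initial} with Proposition~\ref{prop:minimisation-simple} in the obvious way. Your write-up is a bit more explicit about verifying the side conditions (that $\reach_\mathcal{A}$ is an automaton homomorphism and that the output of $\mathcal{A}_0$ lies in $\epi$), but the argument is the same.
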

\begin{proof}
    First, suppose that $\mathcal{A}$ is minimal.
	By Lemma~\ref{lm:minimal-from-initial}, we know that $\mathcal{A}$ is the minimisation (and, in particular, a quotient) of $(\Sigma^\diamond I, \alpha_I, \eta_I, \mathcal{L}(\mathcal{A}))$.
	In that case, $\mathcal{A}$ is reachable, and thus, since $o \in \epi$, we have $\mathcal{L}(\mathcal{A}) \in \epi$.
	By Proposition~\ref{prop:minimisation-simple}, we conclude that $\mathcal{A}$ is simple.

    Conversely, let $\mathcal{A}$ be simple and reachable.
    By reachability, $\mathcal{A}$ is a quotient automaton of $(\Sigma^\diamond I, \alpha_I, \eta_I, \mathcal{L}(\mathcal{A}))$; also, $\mathcal{L}(\mathcal{A}) = o \circ \reach \in \epi$.
    Proposition~\ref{prop:minimisation-simple} then tells us that $\mathcal{A}$ is the minimisation of $(\Sigma^\diamond I, \alpha_I, \eta_I, \mathcal{L}(\mathcal{A}))$; by Lemma~\ref{lm:minimal-from-initial} we conclude that $\mathcal{A}$ is minimal.
\end{proof}

\section{Nerode equivalence}%
\label{sec:nerode}

We now show a generalised Nerode equivalence from which the minimal automaton can be constructed.
Most of this section is based upon the work by Arbib and Manes~\cite{arbib1974}, whose construction was further studied and refined by Anderson et al.~\cite{anderson1976} and Ad\'{a}mek and Trnkov\'a~\cite{AT89}.
We make a significant improvement in generality by phrasing the central equivalence definition (Definition~\ref{def:nerode}) in terms of an arbitrary monad, which unlike the previous cited work allows applications to algebras satisfying a fixed set of equations.
A monad generalisation of the Myhill-Nerode theorem appears in~\cite{bojanczyk2015}, which confines itself to categories of sorted sets and does not characterise the equivalence as an object.

The abstract construction in this section does not require the varietor $\Sigma$.
Instead, we focus on the monad $\Sigma^\diamond$ induced by its adjunction and generalise by fixing an arbitrary monad $(T, \eta, \mu)$ in $\C$.
Let $F \dashv U \colon \C \leftrightarrows \EM(T)$ be the adjunction with its category of (Eilenberg-Moore) algebras.
Given a $\C$-morphism $f \colon X \to UY$ for $X$ in $\C$ and $Y$ in $\EM(T)$, we write $f^\sharp \colon FX \to Y$ for its adjoint transpose.
We can then use a generalised version of the automata defined in Section~\ref{sec:tree-aut}.

\begin{definition}[$T$-automaton]
	A \emph{$T$-automaton} is a tuple $(Q, \delta, i, o)$, where $(Q, \delta)$ is a $T$-algebra and $i \colon I \to Q$ and $o \colon Q \to O$ are morphisms in $\C$.
	A \emph{homomorphism} from $(Q, \delta, i, o)$ to $(Q', \delta', i', o')$ is a $T$-algebra homomorphism $h \colon (Q, \delta) \rightarrow (Q', \delta')$ such that $h \circ i = i'$ and $o' \circ h = o$.
\end{definition}

The reachability map of a $T$-automaton $\aut = (Q, \delta, i, o)$ is given by $\reach_\aut = U(i^\sharp) \colon TI \to Q$ and is therefore the unique $T$-algebra homomorphism $(TI, \mu) \to (Q, q)$ preserving initial states, taking $\eta_I \colon I \to TI$ to be the initial state selector of $TI$.
The language of $\aut$ is given by $\lang(\aut) = o \circ \reach_\aut \colon TI \to O$.

The $\Sigma$-tree automata defined in Section~\ref{sec:tree-aut} are recovered using the following fact: the category of $\Sigma$-algebras is isomorphic to $\EM(T)$ for $T$ the free $\Sigma$-algebra monad $\Sigma^\diamond$.

\begin{remark}
In $\Set$, we may even add equations to the signature~\cite[Chapter~VI.8, Theorem~1]{maclane2013}. For $\Nom$, this follows from the treatment of~\cite{KurzP10}, giving a standard universal algebraic presentation of algebras over $\Nom$. Indeed, results in this section apply to nominal tree automata, unless explicitly stated.
For brevity we therefore focus on examples in $\Set$.
\end{remark}

\begin{assumption}
In this section we will need the class $\epi$ to be the reflexive regular epis.\footnote{%
	In a regular category, (reflexive regular epi, mono) forms a factorisation system in the same way (regular epi, mono) does, though one should note that the theory in this section does not actually need a factorisation system; the instantiation of $\epi$ is only invoked to obtain the right notion of reachability.
}
\end{assumption}

The next lemma will be used in proving our main theorems.

\begin{restatable}{lemmarestate}{coequalizerlift}\label{lem:coeqsharp}
	Suppose $T$ maps reflexive coequalisers to epimorphisms.
    If $i \colon B \to UC$ is such that $U(i^\sharp)$ reflexively coequalises $q_1, q_2 \colon A \to TB$ in $\C$, then $i^\sharp$ reflexively coequalises $q_1^\sharp, q_2^\sharp \colon FA \to FB$.
\end{restatable}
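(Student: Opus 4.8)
The plan is to exploit the universal property of the adjunction $F \dashv U \colon \C \leftrightarrows \EM(T)$ to transfer the coequaliser statement from $\C$ to $\EM(T)$. Recall that $i^\sharp \colon FB \to C$ is the adjoint transpose of $i \colon B \to UC$, and that $q_j^\sharp \colon FA \to FB$ is the transpose of $q_j \colon A \to TB = U(FB)$, i.e.\ $q_j^\sharp = \mu_B^{\EM} \circ T q_j$ composed appropriately, or more precisely the unique $T$-algebra morphism with $U(q_j^\sharp) \circ \eta_A = q_j$. Since $q_1, q_2$ are reflexively coequalised by $U(i^\sharp)$ in $\C$, let $r \colon TB \to A$ be a common splitting, $q_1 \circ r = q_2 \circ r = \id_{TB}$. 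First I would observe that $i^\sharp \circ q_1^\sharp = i^\sharp \circ q_2^\sharp$: both are $T$-algebra morphisms $FA \to C$, and postcomposing with $\eta_A$ gives $U(i^\sharp) \circ q_j$, which agree by hypothesis; uniqueness in the adjunction then forces equality. A reflexive-pair witness $r^\sharp \colon FB \to FA$ for $q_1^\sharp, q_2^\sharp$ is obtained similarly, transposing $r$.

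Next I would verify the universal property. Suppose $g \colon FA \to Z$ in $\EM(T)$ satisfies $g \circ q_1^\sharp = g \circ q_2^\sharp$. Transposing, $U(g) \circ q_1 = U(g) \circ q_2$ as maps $A \to UZ$ in $\C$ — this uses that $U(q_j^\sharp) \circ \eta_A = q_j$ together with naturality, so that $U(g \circ q_j^\sharp) \circ \eta_A = U(g) \circ q_j$. Since $U(i^\sharp)$ reflexively coequalises $q_1, q_2$ in $\C$, there is a unique $\C$-morphism $k \colon Q \to UZ$ (writing $Q$ for the codomain of $U(i^\sharp)$, i.e.\ $UC$) with $k \circ U(i^\sharp) = U(g)$ — here I must be slightly careful: coequalising $q_1,q_2$ means $U(i^\sharp)$ has the relevant universal property, so this factorisation is exactly what we get. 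Now I would take the transpose: but $k$ need not be a $T$-algebra morphism a priori, so I cannot simply transpose it. This is where the hypothesis that $T$ maps reflexive coequalisers to epimorphisms enters.

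The key move: $U(i^\sharp)$ is a reflexive coequaliser in $\C$, hence an epimorphism; moreover, since $C$ is a $T$-algebra and $U$ creates reflexive coequalisers that $T$ preserves as epis... more directly, I would show $i^\sharp$ is itself the coequaliser of $q_1^\sharp, q_2^\sharp$ in $\EM(T)$ by checking that the $\C$-level factorisation $k$ is automatically a $T$-algebra homomorphism. Apply $T$ to the coequaliser diagram: $T q_1, T q_2 \colon TA \to T^2 B$ are reflexively coequalised by $T U(i^\sharp)$ up to the hypothesis (which says $T$ sends reflexive coequalisers to epis — so $T U(i^\sharp)$ is epi, and one checks it still coequalises $Tq_1, Tq_2$). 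Then in the square expressing that $k$ commutes with the algebra structures of $C$ and $Z$, precomposing with the epi $T U(i^\sharp)$ (or rather the algebra structure map combined with $U(i^\sharp)$) reduces the equation to one that holds because $g$ and $i^\sharp$ are algebra morphisms; since that precomposition is epic, $k$ is an algebra morphism. Finally, uniqueness of $k$ as an $\EM(T)$-morphism follows from uniqueness at the $\C$-level (forgetting is faithful) and the fact that $i^\sharp$ is epi in $\EM(T)$, which in turn follows because $U$ reflects epis here or because $U(i^\sharp)$ is epi and $U$ is faithful.

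The main obstacle I anticipate is exactly the step showing the mediating $\C$-morphism $k$ lifts to $\EM(T)$: this is where "$T$ maps reflexive coequalisers to epimorphisms" does real work, and one must set up the diagram chase carefully, using that $T U(i^\sharp)$ remains an epimorphism after applying $T$ so that it can be cancelled on the left in the homomorphism square. Everything else — the factorisation existence, the transpose bookkeeping between $q_j$ and $q_j^\sharp$, the reflexivity of the lifted pair — is routine manipulation of the adjunction unit and counit.
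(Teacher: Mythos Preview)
Your proposal is correct and follows essentially the same approach as the paper: transpose to show $i^\sharp$ coequalises $q_1^\sharp, q_2^\sharp$, obtain the mediating $\C$-morphism from the universal property of $U(i^\sharp)$, and then use that $TU(i^\sharp)$ is epi to cancel it in the algebra-homomorphism square (the paper isolates this last step as a separate lemma). Two minor slips worth fixing: your $g$ should have domain $FB$, not $FA$; and the reflexive section is not literally $r^\sharp$ (since $r \colon TB \to A$ does not have the right shape to transpose) but rather $(\eta_A \circ r \circ \eta_B)^\sharp \colon FB \to FA$, which is what the paper uses.
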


Before defining an abstract Nerode equivalence, we recall the classical definition for languages of words.
Given a language $L \colon A^* \to 2$, the equivalence $R \subseteq A^* \times A^*$ is defined as
\[
	R = \{(u, v) \in A^* \times A^* \mid \forall w \in A^*.\,L(uw) = L(vw)\}.
\]
In this setting, $I = 1$ and $O = 2$.
A function $Q \times A \to Q$ corresponds to an algebra for the monad $T = (-) \times A^*$, whose unit and multiplication are defined using the unit and multiplication of the monoid $A^*$.
If $p_1, p_2 \colon R \to A^* \cong 1 \times A^*$ are the projections, we note that $R$ is defined to be the largest relation making the following diagram commute.
\[
	\begin{tikzcd}[column sep=.3cm,row sep=.4cm]
		R \times A^* \ar{rr}{p_2 \times \id} \ar{dd}[swap]{p_1 \times \id} &
			&
			1 \times A^* \times A^* \ar{d}[outer sep=1mm]{\mu} \\
		&
			&
			1 \times A^* \ar{d}[outer sep=1mm]{L} \\
		1 \times A^* \times A^* \ar{r}[outer sep=1mm]{\mu} &
			1 \times A^* \ar{r}[outer sep=1mm]{L} &
			2
	\end{tikzcd}
\]
This leads to an abstract definition, using a limit\footnote{%
	Note that it is not exactly a limit, as the defining property works with cones under $T$.
} to generalise what it means to be maximal.

\begin{definition}[Nerode equivalence]\label{def:nerode}
	Given a language $L \colon TI \to O$ and an object $R$ with morphisms $p_1, p_2 \colon R \to TI$, we say that $(R, p_1, p_2)$ is the \emph{Nerode equivalence} of $L$ if the diagram below on the left commutes and for all objects $S$ with a reflexive pair $q_1, q_2 \colon S \to TI$ such that the diagram in the middle commutes there is a unique morphism $u \colon S \to R$ making the diagram on the right commute.
	\begin{align*}
		\begin{tikzcd}[ampersand replacement=\&,column sep=.6cm,row sep=.4cm]
			TR \ar{rr}{T{p_2}} \ar{dd}[swap]{T{p_1}} \&
				\&
				TTI \ar{d}[outer sep=1mm]{\mu} \\
			\&
				\&
				TI \ar{d}[outer sep=1mm]{L} \\
			TTI \ar{r}{\mu} \&
				TI \ar{r}{L} \&
				O
		\end{tikzcd} &
		&
			\begin{tikzcd}[ampersand replacement=\&,column sep=.6cm,row sep=.4cm]
				TS \ar{rr}{T{q_2}} \ar{dd}[swap]{T{q_1}} \&
					\&
					TTI \ar{d}[outer sep=1mm]{\mu} \\
				\&
					\&
					TI \ar{d}[outer sep=1mm]{L} \\
				TTI \ar{r}{\mu} \&
					TI \ar{r}{L} \&
					O
			\end{tikzcd} &
			&
			\begin{tikzcd}[ampersand replacement=\&]
				\&
					S \ar{ld}[swap]{q_1} \ar[dashed]{d}{u} \ar{rd}{q_2} \\
				TI \&
					R \ar{l}[swap]{p_1} \ar{r}{p_2} \&
					TI
			\end{tikzcd}
	\end{align*}
\end{definition}

To show the versatility of our definition, we briefly explain a different example where the language is a set of words.
This example cannot be recovered from the original definition by Arbib and Manes~\cite{arbib1974}.

\begin{example}[Syntactic congruence]
	Let $T$ be the free monoid or list monad ${(-)}^*$ so that $\EM(T)$ is the category of monoids, $I = A$, and $O = 2$.
	Given a language $L \colon A^* \to 2$, the Nerode equivalence as defined above is then the largest relation $R \subseteq A^* \times A^*$ such that
	\[
		\frac{n \in \mathbb{N} \qquad (u_1, v_1), \ldots, (u_n, v_n) \in R}{L(u_1 \cdots u_n) = L(v_1 \cdots v_n)}.
	\]
	Equivalently, $R$ is the largest relation such that
	\[
		\frac{(u, v) \in R \qquad w, x \in A^*}{L(wux) = L(wvx)},
	\]
	which is precisely the \emph{syntactic congruence} of the language.
\end{example}

We can show that the Nerode equivalence in $\Set$ exists, as long as the monad is finitary.
To define it concretely, we use the following piece of notation.
For any set $X$ and $x \in X$, denote by $1_x \colon 1 \to X$ the constant $x$ function, assuming no ambiguity of the set involved.

\begin{restatable}{propositionrestate}{nerodefin}\label{prop:nerodefin}
	For $\C = \Set$ and $T$ any finitary monad, every language $L \colon TI \to O$ has a Nerode equivalence given by
	\[
		R = \{(u, v) \in TI \times TI \mid L \circ \mu \circ T[\id_{TI}, 1_u] = L \circ \mu \circ T[\id_{TI}, 1_v] \colon T(TI + 1) \to O\}
	\]
	with the corresponding projections $p_1, p_2 \colon R \to TI$.
\end{restatable}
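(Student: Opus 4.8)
The plan is to check directly that $(R, p_1, p_2)$ meets the two requirements of Definition~\ref{def:nerode}: commutativity of the left-hand diagram, and the universal property against reflexive pairs satisfying the middle diagram. For the first requirement we must show $L \circ \mu \circ T p_1 = L \circ \mu \circ T p_2 \colon TR \to O$. Fix $w \in TR$. Since $T$ is finitary, $w = Tr(w')$ for some function $r \colon n \to R$ out of a finite set $n$ and some $w' \in Tn$; writing $r(k) = (u_k, v_k)$, the composites $p_1 \circ r$ and $p_2 \circ r$ are the tuples $\bar{u}, \bar{v} \colon n \to TI$, so it suffices to prove $L \circ \mu \circ T\bar{u}$ and $L \circ \mu \circ T\bar{v}$ agree at $w'$. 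I would do this by a telescoping argument: for $0 \le j \le n$ let $h_j \colon n \to TI$ agree with $\bar{v}$ on $\{k \le j\}$ and with $\bar{u}$ on $\{k > j\}$, so $h_0 = \bar{u}$ and $h_n = \bar{v}$, and show $L(\mu(T h_{j-1}(w'))) = L(\mu(T h_j(w')))$ for each $j$. The two maps $h_{j-1}$ and $h_j$ differ only in the $j$-th coordinate, so I factor them through a map $\phi_j \colon n \to TI + 1$ that sends each coordinate $k \neq j$ into the left summand via its common value and sends $j$ to the right summand; then $h_{j-1} = [\id_{TI}, 1_{u_j}] \circ \phi_j$ and $h_j = [\id_{TI}, 1_{v_j}] \circ \phi_j$, and the desired equality at $w'$ is exactly the instance of the membership condition $(u_j, v_j) \in R$ applied to $T\phi_j(w') \in T(TI + 1)$.

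For the universal property, let $q_1, q_2 \colon S \to TI$ be a reflexive pair, with common section $\sigma \colon TI \to S$ satisfying $q_1 \sigma = q_2 \sigma = \id_{TI}$, making the middle diagram commute. Uniqueness of $u \colon S \to R$ is immediate: $R \subseteq TI \times TI$ with $p_1, p_2$ the projections forces $u(s) = (q_1(s), q_2(s))$. For existence I must verify $(q_1(s), q_2(s)) \in R$ for every $s \in S$, i.e.\ $L \circ \mu \circ T[\id_{TI}, 1_{q_1(s)}] = L \circ \mu \circ T[\id_{TI}, 1_{q_2(s)}]$. Using the section I rewrite $[\id_{TI}, 1_{q_k(s)}] = q_k \circ [\sigma, 1_s] \colon TI + 1 \to S$ for $k = 1, 2$, so both composites are $L \circ \mu \circ Tq_k$ precomposed with the common map $T[\sigma, 1_s]$, and these agree by commutativity of the middle diagram. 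Then $u \colon s \mapsto (q_1(s), q_2(s))$ clearly satisfies $p_1 \circ u = q_1$ and $p_2 \circ u = q_2$. (As a sanity check, instantiating $q_i = p_i$ shows $R$ contains the diagonal, so $p_1, p_2$ is itself a reflexive pair, consistent with $R$ being the largest such relation.)

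The step I expect to be the main obstacle is the first requirement: the reduction of an arbitrary element of $TR$ to one assembled from a finite tuple of $R$-pairs — this is precisely where finitariness of $T$ is essential — together with the bookkeeping needed to isolate a single coordinate so that the hypothesis $(u_j, v_j) \in R$, phrased via the one-hole copairing $[\id_{TI}, 1_{(-)}]$, can be applied one pair at a time. The universal-property part is comparatively routine once one observes that ``reflexive pair'' is exactly the ingredient needed to re-express $[\id_{TI}, 1_{q_k(s)}]$ as $q_k$ postcomposed with a single map out of $TI + 1$.
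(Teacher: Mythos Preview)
Your proposal is correct and follows essentially the same approach as the paper: a finitariness-based reduction to a finite tuple of $R$-pairs followed by a one-coordinate-at-a-time interpolation that factors through $T(TI+1)$ to invoke the membership condition, and the same section trick $[\id_{TI},1_{q_k(s)}]=q_k\circ[\sigma,1_s]$ for the universal property. The only cosmetic difference is that the paper indexes the interpolation by finite subsets $E\subseteq R$ (with inclusion $e\colon E\hookrightarrow R$) rather than by a finite set $n$ with a map $r\colon n\to R$, but the arguments are otherwise identical.
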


The definition of $R$ above states that $u, v \in TI$ are related iff the elements of $TI$ formed by putting either $u$ or $v$ in any \emph{context} and then applying $\mu$ have the same value under $L$.
A context is an element of $T(TI + 1)$, where $1 = \{\square\}$ denotes a \emph{hole} where either $u$ or $v$ can be plugged in.
In the tree automata literature, such contexts, although restricted to contain a single instance of $\square$, are used in algorithms for minimisation~\cite{HogbergMM09} and learning~\cite{sakakibara1990,drewes2007}.
Unfortunately, the characterisation of Proposition~\ref{prop:nerodefin} does not directly extend to $\Nom$, because the functions $1_x$ are not, in general, equivariant.
We leave this for future work.

Below we show that, under a few mild assumptions, the abstract equivalence is in fact a congruence: it induces a $T$-automaton, which moreover is minimal.
Intuitively, given a language $L \colon TI \to O$ that has a Nerode equivalence, we use the equivalence to quotient the $T$-automaton $(FI, \eta, L)$.
We first need a technical lemma.

\begin{restatable}{lemmarestate}{liftnerode}\label{lem:liftnerode}
	If $\C$ has coproducts, then for any Nerode equivalence $(R, p_1, p_2)$ there exists a unique $T$-algebra structure $u \colon TR \to R$ making $p_1$ and $p_2$ $T$-algebra homomorphisms $(R, u) \to (TI, \mu)$ that have a common section.
\end{restatable}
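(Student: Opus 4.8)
The plan is to obtain the $T$-algebra structure $u \colon TR \to R$ via the universal property of the Nerode equivalence, applied to a cleverly chosen reflexive pair out of $TR$. Concretely, I would consider $S = TR$ with the two morphisms $q_1 = \mu_I \circ T p_1$ and $q_2 = \mu_I \circ T p_2 \colon TR \to TI$. These form a reflexive pair: the common section is $T r$, where $r \colon TI \to R$ is a common section of $p_1, p_2$ — but wait, we do not yet know $R$ has such a section. So instead I would first note that $(R, p_1, p_2)$ being the Nerode equivalence forces $R$ to be ``reflexive'' in the sense that applying the universal property to $S = TI$ with $q_1 = q_2 = \mu_I$ (trivially making the middle diagram commute, and reflexive with section $\eta_{TI}$) yields a morphism $r \colon TI \to R$ with $p_1 \circ r = p_2 \circ r = \id_{TI}$. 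This $r$ is the common section we will ultimately exhibit, and $T r$ is the section witnessing that $(q_1, q_2)$ above is a reflexive pair.

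Next I would check that the pair $q_1 = \mu_I \circ T p_1$, $q_2 = \mu_I \circ T p_2$ makes the ``middle'' pentagon of Definition~\ref{def:nerode} commute, i.e.\ $L \circ \mu_I \circ T q_1 = L \circ \mu_I \circ T q_2 \colon T(TR) \to O$. This is where the left-hand (defining) diagram for $R$ enters: starting from $L \circ \mu_I \circ T(\mu_I \circ T p_1) = L \circ \mu_I \circ \mu_{TI} \circ TT p_1$, I would use the associativity law $\mu_I \circ \mu_{TI} = \mu_I \circ T \mu_I$ to rewrite this as $L \circ \mu_I \circ T(\mu_I \circ T p_1) = L \circ \mu_I \circ T(L$-compatible term$)$ — more precisely, reduce the equation to an instance of the commuting square defining $R$ applied after $T(\text{something into } TR)$, using naturality of $\mu$ and functoriality of $T$. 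The key identity is that the defining diagram of $R$ says exactly $L \circ \mu_I \circ T p_2 = L \circ \mu_I \circ T p_1$ on $TR$, and one bootstraps this one level up by applying $T$ and precomposing with the appropriate structure map. Once the middle diagram commutes for $(q_1, q_2)$, the universal property of $R$ delivers a unique $u \colon TR \to R$ with $p_1 \circ u = q_1 = \mu_I \circ T p_1$ and $p_2 \circ u = q_2 = \mu_I \circ T p_2$; these two equations say precisely that $p_1$ and $p_2$ are $T$-algebra homomorphisms $(R, u) \to (TI, \mu_I)$.

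It then remains to verify that $u$ is a genuine $T$-algebra structure, i.e.\ $u \circ \eta_R = \id_R$ and $u \circ \mu_R = u \circ T u$. For the unit law, both $u \circ \eta_R$ and $\id_R$ are morphisms $R \to R$; postcomposing with $p_1$ (and with $p_2$) and using $p_1 \circ u = \mu_I \circ T p_1$ together with naturality of $\eta$ and the monad unit law $\mu_I \circ \eta_{TI} = \id_{TI}$ shows $p_1 \circ u \circ \eta_R = p_1$ and likewise for $p_2$; since $(p_1, p_2)$ are jointly monic when $R$ is viewed as the Nerode equivalence — which follows from the uniqueness clause in Definition~\ref{def:nerode}, any two morphisms into $R$ agreeing after $p_1$ and $p_2$ must be equal (apply the universal property with $S = R$) — we conclude $u \circ \eta_R = \id_R$. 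The associativity law is handled identically: compose both $u \circ \mu_R$ and $u \circ T u$ with $p_1, p_2$, reduce to the monad associativity law for $\mu_I$ and naturality, and invoke joint monicity. Finally, uniqueness of $u$ is immediate from the uniqueness in the universal property (or again from joint monicity), and $r$ constructed at the start is the required common section. The main obstacle I anticipate is the middle-diagram verification in the second paragraph: getting the bookkeeping of $\mu$, $\mu_T$, $T\mu$ and the naturality squares right so that the higher instance of the defining diagram genuinely reduces to the base instance — the rest is formal diagram chasing once joint monicity of $(p_1, p_2)$ is established.
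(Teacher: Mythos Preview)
Your plan is sound and closely parallels the paper's proof: both obtain the algebra structure by applying the universal property of the Nerode equivalence to the pair $(\mu_I \circ Tp_1,\, \mu_I \circ Tp_2)$ (the paper packages the relevant verification as Lemma~\ref{lem:extendnerode}), and both establish the unit and multiplication laws via the uniqueness clause. Two minor corrections: in your first paragraph you should have $q_1 = q_2 = \id_{TI}$ (not $\mu_I$), with common section $\id_{TI}$ (not $\eta_{TI}$); and your ``joint monicity'' is not an unconditional fact about $(p_1,p_2)$ --- the uniqueness clause of Definition~\ref{def:nerode} only applies once you have checked that the relevant pair out of the given source is reflexive and satisfies the middle diagram, so each invocation (unit, associativity) still needs that bookkeeping, which is exactly what Lemma~\ref{lem:extendnerode} and its iterate supply.

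The one substantive point is the common section. You merely assert that the $\C$-morphism $r$ is ``the required common section'', but the lemma (as the paper reads it and as its downstream use requires) asks for a section in $\EM(T)$: you must check $u \circ Tr = r \circ \mu_I$. This follows by your own method --- both sides postcompose with each $p_k$ to $\mu_I$, and $(\mu_I,\mu_I)$ is a reflexive pair from $TTI$ trivially satisfying the Nerode condition, so uniqueness gives the equation --- but it should be stated. Here your route genuinely diverges from the paper's: the paper invokes the coproduct hypothesis, applying the universal property to the reflexive pair $([\eta_I,\id_{TI}],[\eta_I,\id_{TI}]) \colon I + TI \to TI$ to obtain $v \colon I + TI \to R$, and then takes the transpose $(v \circ \kappa_1)^\sharp \colon FI \to (R,u)$, which is a $T$-algebra homomorphism for free by the adjunction. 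Your approach, once the gap above is filled, never touches coproducts and so in fact proves the lemma without that hypothesis; the paper's approach trades the extra assumption for a section that is visibly an algebra map without any further verification.
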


\begin{theorem}
	If $\C$ has coproducts and reflexive coequalisers and $T$ preserves reflexive coequalisers, then for every language that has a Nerode equivalence there exists a minimal $T$-automaton accepting it.
\end{theorem}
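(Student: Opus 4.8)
The plan is to build the minimal $T$-automaton directly from the Nerode equivalence $(R, p_1, p_2)$ of the given language $L \colon TI \to O$, by quotienting the $T$-automaton $(FI, \eta, L)$ along a coequaliser of $p_1^\sharp, p_2^\sharp \colon FR \to FI$. First I would invoke Lemma~\ref{lem:liftnerode}: since $\C$ has coproducts, there is a unique $T$-algebra structure $u \colon TR \to R$ making $p_1, p_2 \colon (R,u) \to (TI,\mu)$ $T$-algebra homomorphisms sharing a common section, i.e.\ $(p_1,p_2)$ is a reflexive pair in $\EM(T)$. Because $\C$ has reflexive coequalisers and $T$ preserves them, the category $\EM(T)$ has reflexive coequalisers, computed on carriers as in $\C$. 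Let $q \colon (TI,\mu) \to (M,m)$ be the coequaliser of $p_1,p_2$ in $\EM(T)$ (equivalently the coequaliser of $p_1^\sharp, p_2^\sharp$, identifying $FI$ with $(TI,\mu)$). I would then equip $M$ with initial state $i_m = U(q) \circ \eta_I \colon I \to M$, so that $U(q)$ becomes a $T$-algebra homomorphism preserving initial states, hence by uniqueness $U(q) = \reach$ for the resulting automaton, making it reachable.

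The next, and central, step is to equip $M$ with an output map $o_m \colon M \to O$ satisfying $o_m \circ U(q) = L$, so that $\lang$ of the new automaton is $L$. This is where the Nerode property is used in earnest: the left-hand commuting square in Definition~\ref{def:nerode} says exactly that $L \circ \mu \circ Tp_1 = L \circ \mu \circ Tp_2$, i.e.\ $L$ coequalises $p_1^\sharp, p_2^\sharp$ in $\C$ — after transposing, $L$ corepresents a cocone under the reflexive pair $U(p_1^\sharp), U(p_2^\sharp)$ in $\C$. The subtlety is that $q$ is a coequaliser in $\EM(T)$, whose underlying map in $\C$ need not a priori be the $\C$-coequaliser; but since $T$ preserves reflexive coequalisers, $U(q)$ \emph{is} the reflexive coequaliser of $U(p_1^\sharp), U(p_2^\sharp)$ in $\C$ (the standard fact that $U \colon \EM(T) \to \C$ creates reflexive coequalisers of $U$-split, hence of all reflexive, pairs when $T$ preserves them). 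Therefore the universal property of $U(q)$ in $\C$ yields a \emph{unique} $o_m \colon M \to O$ with $o_m \circ U(q) = L$. Combined with $i_m$ this gives a $T$-automaton $\mathcal{M} = (M, m, i_m, o_m)$ with $\lang(\mathcal{M}) = o_m \circ \reach_{\mathcal{M}} = o_m \circ U(q) = L$, and it is reachable since $U(q)$ is (regular) epi.

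It remains to prove $\mathcal{M}$ is minimal, i.e.\ for every reachable $T$-automaton $\mathcal{A}' = (Q',\delta',i',o')$ with $\lang(\mathcal{A}') = L$ there is a (necessarily unique) homomorphism $\mathcal{A}' \to \mathcal{M}$. Given such an $\mathcal{A}'$, reachability gives a regular epi $\reach_{\mathcal{A}'} \colon (TI,\mu) \to (Q',\delta')$ with $o' \circ \reach_{\mathcal{A}'} = L$. I would show $\reach_{\mathcal{A}'}$ coequalises $p_1^\sharp, p_2^\sharp$: abstractly, the kernel pair (or kernel congruence) of $\reach_{\mathcal{A}'}$, pushed through $o'$, lands in the equaliser of $L$ against itself in the sense of the middle diagram of Definition~\ref{def:nerode}, so by the Nerode universal property it factors through $(R,p_1,p_2)$; hence $\reach_{\mathcal{A}'} \circ p_1^\sharp = \reach_{\mathcal{A}'} \circ p_2^\sharp$. (Here is where Lemma~\ref{lem:coeqsharp} and the reflexivity of the pair do their work — the relevant pair is reflexive, and $T$ maps reflexive coequalisers to epis, so the $\sharp$-transposed pair is again reflexively coequalised appropriately.) By the universal property of the coequaliser $q$, there is a unique $T$-algebra homomorphism $h \colon (Q',\delta') \to (M,m)$ with $h \circ \reach_{\mathcal{A}'} = U(q) = \reach_{\mathcal{M}}$. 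Finally $h$ preserves initial states ($h \circ i' = h \circ \reach_{\mathcal{A}'} \circ \eta_I = \reach_{\mathcal{M}} \circ \eta_I = i_m$) and outputs ($o_m \circ h \circ \reach_{\mathcal{A}'} = o_m \circ \reach_{\mathcal{M}} = L = o' \circ \reach_{\mathcal{A}'}$, then cancel the epi $\reach_{\mathcal{A}'}$); uniqueness follows because any such homomorphism is determined on the image of $\reach_{\mathcal{A}'}$, which is all of $Q'$.

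The main obstacle I anticipate is the bookkeeping around which coequalisers live in $\C$ versus in $\EM(T)$, and correctly translating the Nerode square (stated with cones \emph{under} $T$, via the $\mu$-composites) into a plain coequaliser statement for $p_1^\sharp, p_2^\sharp$ — this is exactly what Lemma~\ref{lem:coeqsharp} is engineered for, and getting the reflexivity hypotheses lined up so that it applies (and so that $U$ preserves the relevant reflexive coequaliser) is the delicate part. Once that bridge is in place, the universal-property chase for minimality is routine.
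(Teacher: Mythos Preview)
Your construction of $\mathcal{M}$ is essentially the paper's: take the reflexive coequaliser of $p_1,p_2$ (lifted to $\EM(T)$ via Lemma~\ref{lem:liftnerode} and preservation of reflexive coequalisers by $T$), read off $i_m$ and $o_m$, and observe $\reach_{\mathcal{M}} = U(q)$. That part is fine.

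The minimality argument, however, has the universal properties pointing the wrong way. You claim to show $\reach_{\mathcal{A}'} \circ p_1^\sharp = \reach_{\mathcal{A}'} \circ p_2^\sharp$ and then invoke ``the universal property of the coequaliser $q$'' to produce $h \colon Q' \to M$. Two problems. First, the stated equation is false in general: take $\mathcal{A}' = (TI,\mu,\eta_I,L)$ itself, which is reachable with $\reach_{\mathcal{A}'} = \id_{TI}$; then $\reach_{\mathcal{A}'} \circ p_1 = p_1 \neq p_2 = \reach_{\mathcal{A}'} \circ p_2$ whenever $R$ is nontrivial. Second, even if $\reach_{\mathcal{A}'}$ did coequalise $p_1,p_2$, the universal property of $q$ would yield a map $M \to Q'$, not $Q' \to M$. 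Your kernel-pair argument gives a map $K \to R$ (that is what the Nerode universal property supplies: maps \emph{into} $R$), but what you would need for $\reach_{\mathcal{A}'} \circ p_1 = \reach_{\mathcal{A}'} \circ p_2$ is a map $R \to K$. (Incidentally, kernel pairs are not among the hypotheses of this theorem; they appear only in the converse.)

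The fix is to reverse the roles. Reachability of $\mathcal{A}'$ means $\reach_{\mathcal{A}'}$ is a reflexive regular epi, hence the coequaliser of some reflexive pair $q_1,q_2 \colon S \to TI$. One checks (using that $\reach_{\mathcal{A}'}$ is a $T$-algebra homomorphism and $o' \circ \reach_{\mathcal{A}'} = L$) that $(q_1,q_2)$ satisfies the Nerode condition, so the Nerode universal property gives $v \colon S \to R$ with $p_k \circ v = q_k$. Then $U(q) \circ q_1 = U(q) \circ q_2$ because $U(q)$ coequalises $p_1,p_2$. Now Lemma~\ref{lem:coeqsharp} upgrades $\reach_{\mathcal{A}'} = U({i'}^\sharp)$ to the coequaliser of $q_1^\sharp,q_2^\sharp$ in $\EM(T)$, and since $i_m^\sharp$ coequalises these too, the universal property of \emph{this} coequaliser (not of $q$) yields the desired $h \colon Q' \to M$. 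Your verification that $h$ respects $i$ and $o$ is then correct.
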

\begin{proof}
	Let $L \colon TI \to O$ be the language with Nerode equivalence $(R, p_1, p_2)$ and $c \colon T \to M$ the coequaliser of $p_1$ and $p_2$ in $\C$.
	By Lemma~\ref{lem:liftnerode} there exists a $T$-algebra structure on $R$ making $p_1$ and $p_2$ $T$-algebra homomorphisms into $(TI, \mu)$ that have a common section.
	Since $T$ preserves reflexive coequalisers, they are lifted by $U$, and we have a morphism $m \colon TM \to M$ making $(M, m)$ a $T$-algebra such that $c$ is a $T$-algebra homomorphism $(TI, \mu) \to (M, m)$.
	Since the diagram below on the left commutes and $c$ coequalises $p_1$ and $p_2$, there is a unique morphism $o_M$ rendering the diagram on the right commutative.
	\begin{equation}\label{eq:minlang}
		\begin{tikzcd}[column sep=.6cm,row sep=.4cm]
			R \ar{rr}{p_2} \ar{rd}{\eta} \ar{dd}[swap]{p_1} \ar[phantom,bend left=5]{rrd}[pos=.6]{\circled{1}} \ar[phantom,bend right=10]{rdd}[pos=.6]{\circled{1}} &
				&
				TI \ar[equal,bend left=25]{dr}{} \ar{d}{\eta} \ar[phantom,bend right=5]{dr}{\circled{2}} \\
			&
				TR \ar{r}{T{p_2}} \ar{d}{T{p_1}} \ar[phantom]{rrdd}{\circled{3}} &
				TTI \ar{r}{\mu} &
				TI \ar{dd}{L} \\
			TI \ar[equal,bend right=25]{dr}{} \ar{r}{\eta} \ar[phantom,bend left=5]{dr}{\circled{2}} &
				TTI \ar{d}{\mu} \\
			&
				TI \ar{rr}{L} &
				&
				O
		\end{tikzcd}
		\qquad
		\begin{gathered}
			\begin{array}{l}
				\circled{1}\text{ naturality of $\eta$} \\
				\circled{2}\text{ monad law} \\
				\circled{3}\text{ Nerode equivalence}
			\end{array} \\[.3cm]
			\begin{tikzcd}[column sep=.6cm,row sep=.3cm]
				TI \ar{r}{c} \ar{rd}[swap]{L} &
					M \ar[dashed]{d}{o_M} \\
				&
					O
			\end{tikzcd}
		\end{gathered}
	\end{equation}
	Choosing $i_M = c \circ \eta \colon I \to M$, we obtain a $T$-automaton $\mathcal{M} = (M, m, i_M, o_M)$.
	Note that $U(i_M^\sharp) = c$, so $c$ is the reachability map of $\mathcal{M}$.
	Hence, we find that $\lang(\mathcal{M}) = L$ by~\eqref{eq:minlang}.
	The morphism $c$ coequalises the reflexive pair $(p_1, p_2)$ by definition, so $\mathcal{M}$ is reachable.

	To see that $\mathcal{M}$ is minimal, consider any reachable $T$-automaton $\aut = (Q, \delta, i, o)$ such that $\lang(\aut) = L$.
	Reachability amounts to the reachability map $\reach \colon TI \to UQ$ being the reflexive coequaliser of a pair of morphisms $q_1, q_2 \colon S \to TI$.
	From commutativity of
	\[
		\begin{tikzcd}[column sep=.6cm,row sep=.4cm]
			TS \ar{rrr}{T{q_2}} \ar{ddd}[swap]{T{q_1}} \ar[phantom]{rd}{\circled{1}} &
				&
				&
				TTI \ar{d}{\mu} \ar{dll}[swap]{T{\reach}} \\
			&
				T{Q} \ar{rd}{\delta} \ar[phantom,bend left=10]{rr}[pos=.7]{\circled{2}} \ar[phantom]{dd}{\circled{2}} &
				&
				TI \ar{dd}{L} \ar{dl}[swap]{\reach} \\
			&
				&
				Q \ar{rd}{o} \ar[phantom]{r}{\circled{3}} \ar[phantom]{d}{\circled{3}} &
				{} \\
			TTI \ar{ruu}{T{\reach}} \ar{r}{\mu} &
				TI \ar{ru}{\reach} \ar{rr}[pos=.65]{L} &
				{} &
				O
		\end{tikzcd}
		\qquad
		\begin{array}{l}
			\circled{1}\text{ $\reach$ coequalises $q_1$ and $q_2$} \\
			\circled{2}\text{ $\reach$ is a $T$-algebra homomorphism} \\
			\circled{3}\text{ $\lang(\aut) = L$}
		\end{array}
	\]
	we obtain by the Nerode equivalence property a unique morphism $v \colon S \to R$ making the diagram below on the left commute.
	\begin{align*}
		\begin{tikzcd}[ampersand replacement=\&]
			\&
				S \ar{ld}[swap]{q_1} \ar[dashed]{d}{v} \ar{rd}{q_2} \\
			TI \&
				R \ar{l}[swap]{p_1} \ar{r}{p_2} \&
				TI
		\end{tikzcd} &
			&
			\begin{tikzcd}[ampersand replacement=\&,column sep=.6cm,row sep=.3cm]
				S \ar{rr}{q_2} \ar{rd}{v} \ar{dd}[swap]{q_1} \&
					\&
					TI \ar{dd}{c} \\
				\&
					R \ar{ur}{p_2} \ar{dl}[swap]{p_1} \\
				TI \ar{rr}{c} \&
					\&
					M
			\end{tikzcd}
	\end{align*}
	Extending this with $c$, the coequaliser of $p_1$ and $p_2$, gives the commutative diagram on the right.
	Recall that $U(i_M^\sharp) = c$.
	We now find
	\[
		i_M^\sharp \circ q_1^\sharp = {(U(i_M^\sharp) \circ q_1)}^\sharp = {(c \circ q_1)}^\sharp = {(c \circ q_2)}^\sharp = {(U(i_M^\sharp) \circ q_2)}^\sharp = i_M^\sharp \circ q_2^\sharp.
	\]
	Here the first and last equality apply a general naturality property of the adjunction.
	Since $\reach = U(i^\sharp)$ is the reflexive coequaliser of $q_1$ and $q_2$, $i^\sharp$ is the reflexive coequaliser of $q_1^\sharp$ and $q_2^\sharp$ by Lemma~\ref{lem:coeqsharp}.
	We then obtain a unique $T$-algebra homomorphism $h \colon (Q, \delta) \to (M, m)$ making the diagram below on the left commute.
	\[
		\begin{gathered}
			\begin{tikzcd}[column sep=.6cm,row sep=.5cm]
				FI \ar{r}{i^\sharp} \ar{rd}[swap]{i_M^\sharp} &
					Q \ar[dashed]{d}{h} \\
				&
					(M, m)
			\end{tikzcd}
		\end{gathered}
		\quad
		\begin{gathered}
			\begin{tikzcd}[column sep=.6cm,row sep=.6cm]
				TI \ar{rr}{\reach} \ar[bend left=10]{rrd}{L} \ar{rd}{c} \ar{d}[swap]{\reach} &
					{} \ar[phantom]{d}[pos=.7]{\text{\small\eqref{eq:minlang}}} \ar[phantom,bend left=20]{dr}[pos=.55]{\circled{1}} &
					Q \ar{d}{o} \\
        	    Q \ar{r}[swap]{h} &
					M \ar{r}[swap]{o_M} &
					O
			\end{tikzcd}
		\end{gathered}
		\quad
		\begin{gathered}
			\begin{tikzcd}[column sep=.6cm,row sep=.6cm]
				I \ar{rd}{\eta} \ar[bend right=40]{rdd}[swap]{i} \ar[bend left=40]{rrdd}{i_M} \ar[phantom,bend right=5]{rdd}[swap]{\circled{2}} \ar[phantom,bend left=25]{rrdd}{\circled{3}} \\
				&
					TI \ar{d}[swap]{\reach} \ar{rd}{c} \\
				&
					Q \ar{r}[swap]{h} &
					M
			\end{tikzcd}
        	\begin{array}{l}
				\circled{1}\text{ $\lang(\aut) = L$} \\
				\circled{2}\text{ definition of $\reach$} \\
				\circled{3}\text{ definition of $i_M$}
        	\end{array}
		\end{gathered}
	\]
	From commutativity of the other diagrams we find $o_M \circ h = o$ (using that $\reach$ is epi) and $h \circ i = i_M$.
	Thus, $h$ is a $T$-automaton homomorphism $\aut \to \mathcal{M}$.
	To see that it is unique, note that any $T$-automaton homomorphism $h' \colon \aut \to \mathcal{M}$ is a $T$-algebra homomorphism $(Q, \delta) \to (M, m)$ such that $h' \circ i = i_M$.
	It is then not hard to see that $h' \circ i^\sharp = {(h' \circ i)}^\sharp = i_M^\sharp$.
	We conclude that $h' = h$ by the uniqueness property of $h$ satisfying $h \circ i^\sharp = i_M^\sharp$.
\end{proof}

\begin{remark}
We briefly discuss the conditions of the above theorem in the specific case of $\C = \Set$ with $T$ a finitary monad.
This includes the setting of tree automata in $\Set$, as a monad on $\Set$ is finitary if and only if $\EM(T)$ is equivalent to the category of algebras for a signature modulo equations.
Proposition~\ref{prop:nerodefin} shows that all Nerode equivalences exist here.
Furthermore, Lack and Rosick{\`y}~\cite{lack2011} observe that an endofunctor on $\Set$ is finitary if and only if it preserves sifted colimits, of which reflexive coequalisers form an instance.
\end{remark}

To conclude this section we show that the converse of the previous theorem also holds, using the existence of kernel pairs rather than coproducts.
We need a technical lemma first.

\begin{restatable}{lemmarestate}{reflexivelift}\label{lem:reflexive}
	If $q_1, q_2 \colon A \to TB$ is a reflexive pair in $\C$, then so is $(q_1^\sharp, q_2^\sharp)$ in $\EM(T)$.
\end{restatable}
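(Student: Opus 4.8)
The plan is to exhibit an explicit common section of $q_1^\sharp$ and $q_2^\sharp$ in $\EM(T)$, since a parallel pair is reflexive precisely when it admits such a section. By hypothesis $q_1, q_2 \colon A \to TB$ is reflexive in $\C$, so there is some $s \colon TB \to A$ with $q_1 \circ s = \id_{TB} = q_2 \circ s$. Recalling that $FB = (TB, \mu_B)$ and that the free transpose of $q_i \colon A \to TB = U(FB)$ has underlying map $U(q_i^\sharp) = \mu_B \circ T q_i \colon TA \to TB$, the goal becomes: find an $\EM(T)$-morphism $t \colon FB \to FA$ with $q_i^\sharp \circ t = \id_{FB}$ for $i = 1, 2$.

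The key idea is that every morphism in the image of $F$ is automatically an algebra homomorphism, so I would search for the section among free morphisms. The given section $s$ has domain $TB$ rather than $B$, so first I would reshape it into $\sigma := s \circ \eta_B \colon B \to A$ and propose $t := F\sigma \colon FB \to FA$ as the common section. The section equation then gives $q_i \circ \sigma = q_i \circ s \circ \eta_B = \eta_B$ for both $i$.

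To verify, since $U$ is faithful it suffices to compare underlying maps. I would compute $U(q_i^\sharp \circ F\sigma) = \mu_B \circ Tq_i \circ T\sigma = \mu_B \circ T(q_i \circ \sigma) = \mu_B \circ T\eta_B = \id_{TB} = U(\id_{FB})$, the last step being the monad law $\mu \circ T\eta = \id$. This computation is identical for $i = 1$ and $i = 2$, so $F\sigma$ is a common section and $(q_1^\sharp, q_2^\sharp)$ is reflexive. Equivalently, one could set $t = (\eta_A \circ s \circ \eta_B)^\sharp$ and invoke the naturality property $h \circ f^\sharp = (Uh \circ f)^\sharp$ of the transpose, reducing the claim to $\mu_B \circ Tq_i \circ \eta_A \circ s \circ \eta_B = \eta_B$ via naturality of $\eta$ and the same monad law.

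This argument is purely formal and uses none of the finitariness or coequaliser-preservation hypotheses invoked elsewhere in the section; it relies only on the triangle identity $\mu \circ T\eta = \id$, naturality, and faithfulness of $U$. The only real subtlety is the bookkeeping around the free transpose---in particular, remembering that each $q_i$ targets $U(FB) = TB$ rather than $B$, so that the given section must be pre-composed with $\eta_B$ in order to land in a domain from which $F$ delivers the required homomorphism.
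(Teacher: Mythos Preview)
Your proof is correct and coincides with the paper's: the section you construct, $F\sigma = F(s \circ \eta_B)$, is exactly the morphism $(\eta_A \circ s \circ \eta_B)^\sharp$ that the paper uses (as you yourself observe in the ``Equivalently'' remark), and your verification via $\mu_B \circ T\eta_B = \id$ matches the paper's one-line computation.
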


\begin{theorem}
	If $\C$ has kernel pairs and reflexive coequalisers and $T$ preserves reflexive coequalisers, then every language that has a minimal $T$-automaton has a Nerode equivalence.
\end{theorem}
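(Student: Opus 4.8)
The plan is to take $R$ to be the kernel pair of the reachability map of the given minimal automaton, and show it is the Nerode equivalence. So let $\mathcal{M} = (M, m, i_M, o_M)$ be a minimal $T$-automaton with $\lang(\mathcal{M}) = L$, and write $c = \reach_\mathcal{M} = U(i_M^\sharp) \colon TI \to M$, which is a $T$-algebra homomorphism $(TI, \mu) \to (M, m)$ satisfying $c \circ \eta_I = i_M$ and $o_M \circ c = L$. Since $\C$ has kernel pairs, let $p_1, p_2 \colon R \to TI$ be the kernel pair of $c$; it is automatically a reflexive pair, via the diagonal $\Delta \colon TI \to R$ with $p_1 \circ \Delta = p_2 \circ \Delta = \id_{TI}$. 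To see that the defining (left-hand) diagram of Definition~\ref{def:nerode} commutes, I would use $L = o_M \circ c$, the homomorphism property $c \circ \mu = m \circ Tc$, and $c \circ p_1 = c \circ p_2$ to compute $L \circ \mu \circ T p_1 = o_M \circ m \circ T(c \circ p_1) = o_M \circ m \circ T(c \circ p_2) = L \circ \mu \circ T p_2$, which is routine.

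The heart of the proof is the universal property. Given an object $S$ with a reflexive pair $q_1, q_2 \colon S \to TI$, with common section $r \colon TI \to S$, for which the middle diagram commutes (i.e.\ $L \circ \mu \circ T q_1 = L \circ \mu \circ T q_2$), I must produce a unique $u \colon S \to R$ with $p_j \circ u = q_j$. By the universal property of the kernel pair it suffices to establish the single identity $c \circ q_1 = c \circ q_2$; the morphism $u$ and its uniqueness are then immediate. To get this identity I would build an auxiliary automaton. The pair $U(q_1^\sharp), U(q_2^\sharp) = \mu_I \circ T q_1, \mu_I \circ T q_2 \colon TS \to TI$ is reflexive, with common section $\eta_S \circ r$, so let $e \colon TI \to N$ be its reflexive coequaliser in $\C$. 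Since $T$ preserves reflexive coequalisers, $N$ carries a $T$-algebra structure $n$ making $e$ a homomorphism $(TI, \mu) \to (N, n)$. Because the middle diagram says $L$ coequalises this pair, there is a unique $o_N \colon N \to O$ with $o_N \circ e = L$; set $i_N = e \circ \eta_I$. Then $\mathcal{N} = (N, n, i_N, o_N)$ is a $T$-automaton; by freeness of $TI$, $i_N^\sharp = e$, so $\reach_\mathcal{N} = e$, whence $\mathcal{N}$ is reachable (being a coequaliser of a reflexive pair) and $\lang(\mathcal{N}) = o_N \circ e = L$. By minimality of $\mathcal{M}$ there is an automaton homomorphism $g \colon \mathcal{N} \to \mathcal{M}$; since $g \circ i_N^\sharp$ and $i_M^\sharp$ are $T$-algebra homomorphisms $FI \to (M, m)$ with the same composite with $\eta_I$, they coincide, giving $g \circ e = c$. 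Finally, $e$ coequalises $\mu_I \circ T q_1$ and $\mu_I \circ T q_2$, and precomposing with $\eta_S$ (using $\mu_I \circ T q_j \circ \eta_S = q_j$) yields $e \circ q_1 = e \circ q_2$; hence $c \circ q_1 = g \circ e \circ q_1 = g \circ e \circ q_2 = c \circ q_2$, as needed.

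The step I expect to be the main obstacle is exactly this last passage, from commutativity of the middle diagram to $c \circ q_1 = c \circ q_2$: in the classical Myhill--Nerode setting it is a direct computation, but abstractly it has to be extracted from the minimality of $\mathcal{M}$, which forces the detour through the reflexive coequaliser $\mathcal{N}$, together with the standard facts that $T$-algebra structures lift along $T$-preserved reflexive coequalisers and that automaton homomorphisms are compatible with reachability maps. Everything else — commutativity of the defining diagram and the existence and uniqueness of $u$ — then follows formally from the kernel-pair construction.
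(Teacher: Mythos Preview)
Your proposal is correct and follows essentially the same route as the paper: take the kernel pair of the reachability map of the minimal automaton, verify the defining diagram directly, and for the universal property build an auxiliary reachable automaton as the reflexive coequaliser of $U(q_1^\sharp), U(q_2^\sharp)$, then use minimality to factor $c$ through it and conclude $c \circ q_1 = c \circ q_2$. The only cosmetic differences are that the paper packages the reflexivity of $(U(q_1^\sharp), U(q_2^\sharp))$ into a separate lemma (Lemma~\ref{lem:reflexive}) rather than exhibiting the section $\eta_S \circ r$ directly, and it phrases the final step as a commuting diagram rather than the explicit precomposition with $\eta_S$ that you use.
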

\begin{proof}
	Let $\mathcal{M} = (M, \delta_M, i_M, o_M)$ be a minimal $T$-automaton and $p_1, p_2 \colon K \to TI$ the kernel pair of its reachability map $\reach \colon FI \to M$.
	We claim that $K$ together with $p_1$ and $p_2$ forms the Nerode equivalence of $\lang(\mathcal{M})$.
	To see this, note that the diagram below on the left commutes.
	\[
		\begin{tikzcd}[row sep=.5cm]
			TK \ar{rrr}{Tp_2} \ar{ddd}[swap]{Tp_1} \ar[phantom]{rd}{\circled{1}} &
				&
				&
				TTI \ar{d}{\mu} \ar{dll}[swap]{T{\reach}} \\
			&
				TM \ar{rd}{\delta_M} \ar[phantom,bend left=10]{rr}[pos=.7]{\circled{2}} \ar[phantom]{dd}{\circled{2}} &
				&
				TI \ar{dd}{\lang(\mathcal{M})} \ar{dl}[swap]{\reach} \\
			&
				&
				M \ar{rd}{o_M} \ar[phantom]{r}{\circled{3}} \ar[phantom]{d}[pos=.3]{\circled{3}} &
				{} \\
			TTI \ar{ruu}{T{\reach}} \ar{r}{\mu} &
				TI \ar{ru}{\reach} \ar[swap]{rr}{\lang(\mathcal{M})} &
				{} &
				O
		\end{tikzcd}
		\qquad
		\begin{gathered}
		\begin{array}{l}
			\circled{1}\text{ kernel pair} \\
			\circled{2}\text{ $\reach$ is a $T$-algebra homomorphism} \\
			\circled{3}\text{ definition of $\lang(\mathcal{M})$}
		\end{array} \\
		\begin{tikzcd}[column sep=.6cm,row sep=.4cm]
			TS \ar{rr}{T{q_2}} \ar{dd}[swap]{T{q_1}} &
				&
				TTI \ar{d}{\mu} \\
			&
				&
				TI \ar{d}{\lang(\mathcal{M})} \\
			TTI \ar{r}{\mu} &
				TI \ar{r}{\lang(\mathcal{M})} &
				O
		\end{tikzcd}
		\end{gathered}
	\]
	Now if $S$ with $q_1, q_2 \colon S \to TI$ is any reflexive pair making the diagram on the right commute, we let $c \colon TI \to Q$ be the coequaliser of $U(q_1^\sharp)$ and $U(q_2^\sharp)$, noting that this is a reflexive pair by Lemma~\ref{lem:reflexive}.
	Then since $T$ preserves reflexive coequalisers, they are lifted by $U$, meaning that there exists a unique $T$-algebra structure $\delta \colon TQ \to Q$ making $c \colon FI \to (Q, \delta)$ a $T$-algebra homomorphism that is the coequaliser of $q_1^\sharp$ and $q_2^\sharp$.
	We also have $\lang(\mathcal{M}) \circ U(q_1^\sharp) = \lang(\mathcal{M}) \circ U(q_2^\sharp)$ by commutativity of the diagram on the right, so with $c$ coequalising $U(q_1^\sharp)$ and $U(q_2^\sharp)$ there is a unique morphism $o \colon Q \to O$ such that $o \circ c = \lang(\mathcal{M})$.
	Setting $i = c \circ \eta_I$, we have a $T$-automaton $(Q, \delta, i, o)$ with reachability map $U(i^\sharp) = c$ that accepts the language $\lang(\mathcal{M})$.
	By $\mathcal{M}$ being minimal there exists a unique $T$-automaton homomorphism $h \colon (Q, \delta, i, o) \to \mathcal{M}$.
	Then the diagram below on the left commutes.
	\[
		\begin{tikzcd}[column sep=.6cm,row sep=.4cm]
			S \ar{rrr}{q_2} \ar{rd}{\eta} \ar{ddd}[swap]{q_1} &
				&
				&
				TI \ar{ddd}{\reach} \ar{ddl}{c} \\
			&
				TS \ar{rru}[pos=.3]{U(q_2^\sharp)} \ar{ddl}[swap,pos=.3]{U(q_1^\sharp)} \ar[phantom]{rd}{\circled{1}} \\
			&
				&
				Q \ar{rd}{h} \ar[phantom]{r}{\circled{2}} \ar[phantom]{d}{\circled{2}} &
				{} \\
			TI \ar{rrr}{\reach} \ar{rru}{c} &
				&
				{} &
				M
		\end{tikzcd}
		\qquad
		\begin{gathered}
			\begin{array}{l}
				\circled{1}\text{ $c$ coequalises $U(q_1^\sharp)$ and $U(q_2^\sharp)$} \\
				\circled{2}\text{ uniqueness of reachability maps}
			\end{array} \\[.3cm]
			\begin{tikzcd}[column sep=.6cm,row sep=.4cm]
				&
					S \ar{ld}[swap]{q_1} \ar[dashed]{d}{u} \ar{rd}{q_2} \\
				TI &
					K \ar{l}[swap]{p_1} \ar{r}{p_2} &
					TI
			\end{tikzcd}
		\end{gathered}
	\]
	By $p_1$ and $p_2$ being the kernel pair of $\reach$ there exists a unique morphism $u \colon S \to K$ making the diagram on the right commute.
\end{proof}

\section{Tree automata with side-effects}%
\label{sec:determinisation}

We extend tree automata with various side-effects, covering as examples
non-deterministic, weighted, and non-deterministic nominal automata. The key insight is to view
them as algebras in the \emph{Kleisli category of a monad $S$}. We first recall some basic notions.

\subparagraph*{Kleisli category}
Every monad $(S,\eta,\mu)$ has an associated \emph{Kleisli category} $\Kl(S)$, whose objects are those of $\C$ and whose morphisms $X \todot Y$ are morphisms $X \to SX$ in $\C$. Given such morphisms $f \colon X \todot Y$ and $g \colon Y \todot Z$, their (Kleisli) composition $g \circ f$ is defined as $\mu_Z \circ Sg\circ f$ in $\C$. The \emph{Kleisli adjunction} $J \dashv V : \C \leftrightarrows \Kl(S)$ is given by $JX = X$, $J(f \colon X \to Y) = \eta_Y \circ f$ and $VY = SY$, $V(f \colon X \todot Y) = \mu_Y \circ Sf$.

\begin{example}
The category $\Kl(\fpow)$ has morphisms $X \to \fpow Y$, which are finitely-branching relations, and Kleisli composition is relational composition. We have that $J$ maps a function to its graph, and $V$ maps $X$ to $\fpow X$ and a relation $R \colon X \todot Y$ to the function
\[
    \lambda U \subseteq X.\{ y \mid \exists x \in U : (x,y) \in R \}.
\]
The category $\Kl(\mult_\fld)$ has morphisms $X \to \mult_\fld Y$ that are matrices over $\fld$ indexed by $X$ and $Y$ (equivalently, linear maps between the corresponding free vector spaces), and composition is matrix multiplication. The left adjoint $J$ maps a function $f \colon X \to Y$ to the matrix $Jf[x,f(x)] = 1$, for $x \in X$, and 0 elsewhere, and the right adjoint $V$ maps a matrix $X \todot Y$ to the corresponding linear function $\mult_\fld X \to \mult_\fld Y$.
\end{example}

Given an endofunctor $\Sigma$ on $\C$, a functor $\klex{\Sigma} \colon \Kl(S) \rightarrow \Kl(S)$ is an \emph{extension} of $\Sigma$ if the following diagram commutes:
\[
    \begin{tikzcd}[row sep=.35cm]
    \C \ar{d}[outer sep=1mm,swap]{J} \ar{r}{\Sigma} & \C \ar{d}[outer sep=1mm]{J} \\
    \Kl(S) \ar{r}{\klex{\Sigma}} & \Kl(S)
    \end{tikzcd}
\]
Extensions are in bijective correspondence with \emph{distributive laws} $\lambda \colon \Sigma S \Rightarrow S \Sigma$~\cite{Mulry93}, which are natural transformations satisfying certain axioms. Explicitly, we have $\klex{\Sigma} X = \Sigma X$ and $f \colon X \todot Y$ (a morphism $X \to SY$ in $\C$) is mapped to $\lambda_Y \circ \Sigma{}f \colon \Sigma X \to S \Sigma Y$, seen in $\Kl(S)$.

In~\cite[Lemma 2.4]{HasuoJS07} it is shown that a canonical distributive law in $\Set$ always exists in case $\Sigma$ is polynomial and $S$ is a commutative monad~\cite{Kock1970}.

\begin{example}%
\label{ex:distr-laws}
For $\Sigma$ a polynomial $\Set$ endofunctor, the canonical distributive law $\lambda \colon \Sigma \fpow \Rightarrow \fpow \Sigma$ can be directly defined as follows: $\lambda_X(u) = \{ v \in \Sigma X \mid (v,u) \in \img(\langle \Sigma p_1,\Sigma p_2 \rangle) \}$, where $p_1$ and $p_2$ are the left and right projections of the membership relation ${\in_X} \subseteq X \times \fpow X$.

The multiplicity monad $(\mult_\fld, e, m)$ admits a distributive law $\lambda \colon \Sigma \mult_\fld \Rightarrow \mult_\fld \Sigma$, inductively defined as follows, where $\otimes$ is the Kronecker product:
\begin{mathpar}
\lambda^\id = \id_{\mult_\fld}
\and
\lambda^A = e_A
\and
\lambda^{\Sigma_1 \times \Sigma_2} = \otimes \circ (\lambda^{\Sigma_1} \times \lambda^{\Sigma_2})
\and
\lambda^{\amalg_{i \in I} \Sigma_i} = {[\mult_\fld(\kappa_i) \circ \lambda^{\Sigma_i}]}_{i \in I}
\end{mathpar}
\end{example}
We can now define our notion of tree automaton with side-effects.

\begin{definition}[$(\Sigma,S)$-tree automaton]
Given a monad $S$, and $\klex{\Sigma} \colon \Kl(S) \to \Kl(S)$ extending a functor $\Sigma$ on $\C$, a \emph{$(\Sigma, S)$-tree automaton} is
a $\klex{\Sigma}$-tree automaton, i.e., a tuple $(Q,\delta, i, o)$, where $(Q,\delta)$ is a $\klex{\Sigma}$-algebra and $i \colon I \todot Q$ and $o \colon Q \todot O$ are morphisms in $\Kl(S)$.
\end{definition}

\begin{example} \hfill
\begin{enumerate}
	\item Let $\Gamma$ be a signature, and let $\Sigma$ be its signature functor. Then the extension of $\Sigma$ to $\Kl(\fpow)$ is obtained via the distributive law of Example~\ref{ex:distr-laws}, namely $\klex{\Sigma}X = \Sigma X$ and
	\[
            \klex{\Sigma}(f \colon X \todot Y)(\kappa_\gamma(x_1,\dots,x_{\arity(\gamma)})) = \{(\kappa_\gamma(y_1,\dots,y_{\arity(\gamma)})) \mid y_j \in f(x_j)\text{ for $1 \le j \le \arity(\gamma)$}\}
	\]
	for $\gamma \in \Gamma$.
	Non-deterministic tree automata are $(\Sigma, \pow)$-tree automata $(Q,\delta,i,o)$ with $O = \one$, the singleton set. In fact,
	we have that $\delta \colon \coprod_{\gamma \in \Gamma} Q^{\arity(\gamma)} \todot Q$ is a family of relations $\delta_\gamma \subseteq Q^{\arity(\gamma)} \times Q$; by the same token, $i \subseteq I \times Q$ relates the frontier alphabet with (possibly several) states, and $o \subseteq Q \times \one \cong Q$ is the set of final states.

	\item
    Let $\Gamma$ be a signature, and let $\Sigma$ be its signature functor.
	The extension of $\Sigma$ to $\Kl(\mult_\fld)$ is obtained via the distributive law of Example~\ref{ex:distr-laws}. Explicitly, $\klex{\Sigma}X = \Sigma X$ and $\klex{\Sigma}(f \colon X \todot Y)$ maps $\kappa_\gamma(x_1,\dots,x_{\arity(\gamma)})$ to the vector ${[\kappa_\gamma(v_z)]}_{z \in Y^{\arity(\gamma)}}$ such that $v_z$ is the $z$-th component of $f(x_1) \otimes \cdots \otimes f(x_{\arity(\gamma)})$,
	for $\gamma \in \Gamma$.
	A multiplicity tree automaton~\cite{KieferMW14} is a $(\Sigma, \mult_{\fld})$-tree automaton $(Q,\delta,i,o)$ with $I = O = \one$. In fact, we have that $\delta_\gamma$ is the \emph{transition matrix} $Q^{\arity(\gamma)} \todot Q$ in $\fld^{|Q|^{\arity(\gamma)} \times |Q|}$; similarly, $i \colon \one \todot Q$ is the \emph{initial weight vector} in $\fld^{1 \times |Q|}$, and $o \colon Q \todot \one$ is the \emph{final weight vector} in $\fld^{|Q| \times 1}$. Intuitively, $\delta_\gamma$ maps an $\arity(\gamma)$-tuple of elements of $Q$ to a linear combination over $Q$. We note that we can go beyond fields and consider $(\Sigma, \mult_\sring)$-tree automata for a semiring $\sring$, encompassing \emph{weighted} tree automata~\cite{BorchardtV03}.

    \item The nondeterministic version of the automata defined in Section~\ref{sec:nom-tree} can be obtained via the monad $\fspow \colon \Nom \to \Nom$, mapping a nominal set to the nominal set of its finitely-supported, orbit-finite subsets.\footnote{This is the finitary version of the powerset functor in $\Nom$.} This is analogous to non-deterministic nominal automata~\cite{BojanczykKL14}. We note that $\Kl(\fspow)$ is precisely the category of nominal sets and (orbit-finitely branching) equivariant relations and that $\Sigma_\lambda$ extends to relations just as a set endofunctor---the distributive law is defined as the one for $\fpow$ of Example~\ref{ex:distr-laws}, where all the maps are equivariant. A nondeterministic nominal $\Sigma_\lambda$-tree automaton is a $(\Sigma_\lambda, \fspow)$-tree automaton $(Q,\delta,i,o)$ with $O = \one$, the one-element nominal set with trivial group action. We have that $i$ and the components of $\delta$ are equivariant relations. For instance, $\delta_{\mathsf{lambda}} \subseteq [\atoms]Q \times Q$, and $o$ is an equivariant subset of $Q$.
\end{enumerate}
\end{example}

\noindent We now study language semantics of $(\Sigma,S)$-tree automata. Languages are defined via free algebras (see Section~\ref{sec:tree-aut}). It turns out that the free algebras in $\Alg(\Sigma)$ and $\Alg(\klex{\Sigma})$ are closely related. To see this, we use the following result, which follows from~\cite[Theorem 2.14]{HermidaJ98}.

\begin{lemma}\label{lm:adjunction-lift}
	Let $\Sigma \colon \C \rightarrow \C$ be a functor, $S \colon \C \rightarrow \C$ a monad,
	and $\klex{\Sigma} \colon \Kl(S) \rightarrow \Kl(S)$ an extension of $\Sigma$. Let $\lambda \colon \Sigma S \Rightarrow S \Sigma$ be the corresponding distributive law.

	Then the Kleisli adjunction $J \dashv V : \C \leftrightarrows \Kl(S)$ lifts to an adjunction in:
	$\overline{J} \dashv \overline{V}$
	\begin{mathpar}
		\begin{tikzcd}[column sep=2cm,row sep=.8cm]
			\Alg(\Sigma) \ar[shift left,start anchor=center,end anchor=center,bend left=12,shorten >=6mm,shorten <=6mm,outer sep=.3mm]{r}{\overline{J}} \ar[phantom,start anchor=center,end anchor=center]{r}{\bot} \ar{d}{} &
				\Alg(\klex{\Sigma}) \ar[shift left,start anchor=center,end anchor=center,bend left=12,shorten >=6mm,shorten <=6mm,outer sep=.3mm]{l}{\overline{V}} \ar{d}{} \\
			\C \ar[shift left,start anchor=center,end anchor=center,bend left=12,shorten >=5mm,shorten <=2mm,outer sep=.3mm]{r}{J} \ar[phantom,start anchor=center,end anchor=center]{r}{\bot} &
				\Kl(S) \ar[shift left,start anchor=center,end anchor=center,bend left=12,shorten >=2mm,shorten <=5mm,outer sep=.5mm]{l}{V}
		\end{tikzcd}
\and
	\begin{array}{l@{}l}
		\overline{J}(Q,\delta \colon \Sigma Q \to Q) &= (Q,J(\delta) \colon \klex{\Sigma} Q \todot Q)
		\\[1.5ex]
		\overline{V}(Q,\gamma \colon \klex{\Sigma} Q \todot Q) &= (SQ, V(\gamma) \circ \lambda_Q \colon \Sigma SQ \to SQ)
	\end{array}
	\end{mathpar}
\end{lemma}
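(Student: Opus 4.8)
The plan is to establish the lifting directly. First I would check that $\overline{J}$ and $\overline{V}$, as displayed, are well-defined functors lying over $J$ and $V$ --- that is, $U_\Sigma\,\overline{V} = V\,U_{\klex{\Sigma}}$ and $U_{\klex{\Sigma}}\,\overline{J} = J\,U_\Sigma$ for the evident forgetful functors. Then I would show that the natural bijection underlying the Kleisli adjunction $J \dashv V$ restricts to a bijection between $\Alg(\klex{\Sigma})(\overline{J}(Q,\delta),(P,\gamma))$ and $\Alg(\Sigma)((Q,\delta),\overline{V}(P,\gamma))$, natural in both arguments. Conceptually this is an instance of the principle that a lifting of a left adjoint transports along mates to a lifting of the right adjoint, with the adjunction lifting too --- the content behind \cite[Theorem 2.14]{HermidaJ98}; here the lifting of $J$ is encoded by the \emph{equality} $\klex{\Sigma}J = J\Sigma$ (the extension property), and the structure map of $\overline{V}(P,\gamma)$ is precisely the mate of that equality built from the unit, counit, and $\lambda$. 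Since the two hom-sets in fact coincide on the nose, I would simply carry out the verification by hand.

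For the hom-set identification, the crucial computation unfolds both conditions in terms of $\C$. A $\klex{\Sigma}$-homomorphism $\overline{J}(Q,\delta)\to(P,\gamma)$ is a Kleisli arrow $f\colon Q\todot P$, i.e.\ a $\C$-morphism $f\colon Q\to SP$, satisfying the homomorphism equation in $\Kl(S)$. Using that the structure map of $\overline{J}(Q,\delta)$ is $J\delta = \eta_Q\circ\delta$, the Kleisli composite $f\circ J\delta$ simplifies to $f\circ\delta$ by naturality of $\eta$ and the unit law $\mu\circ\eta_S = \id$; using the explicit action $\klex{\Sigma}f = \lambda_P\circ\Sigma f$ and Kleisli composition, the other side becomes $\mu_P\circ S\gamma\circ\lambda_P\circ\Sigma f$. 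So the condition reads $f\circ\delta = \mu_P\circ S\gamma\circ\lambda_P\circ\Sigma f$. On the other hand $\overline{V}(P,\gamma) = (SP,\,\mu_P\circ S\gamma\circ\lambda_P)$, so a $\Sigma$-homomorphism $(Q,\delta)\to\overline{V}(P,\gamma)$ is a $\C$-morphism $f\colon Q\to SP$ satisfying the \emph{same} equation. Hence the two hom-sets are literally equal as subsets of $\C(Q,SP)$, and the adjunction bijection is this identity. Naturality then follows from the same unfolding: precomposing with $\overline{J}a$ corresponds under the identification to precomposing with $a$ in $\C$ (because the Kleisli composite of $f$ with $Ja$ equals $f\circ a$), and postcomposing with $b$ corresponds to postcomposing with $Vb$ in $\C$ --- exactly the operations appearing on the $\Alg(\Sigma)$ side.

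The step I expect to be the only genuinely non-formal one is verifying that $\overline{V}$ is functorial: given a $\klex{\Sigma}$-homomorphism $h\colon(P,\gamma)\to(P',\gamma')$, one must show that $Vh = \mu_{P'}\circ Sh$ is a $\Sigma$-homomorphism $\overline{V}(P,\gamma)\to\overline{V}(P',\gamma')$. This is a diagram chase that genuinely uses the distributive-law axioms --- naturality of $\lambda$ and compatibility with $\mu$, namely $\lambda_P\circ\Sigma\mu_P = \mu_{\Sigma P}\circ S\lambda_P\circ\lambda_{SP}$ --- together with functoriality of $V$ and the homomorphism equation for $h$. Everything else is bookkeeping: the object part of $\overline{V}$ and all of $\overline{J}$ are immediate, since there are no equations on $\Sigma$- or $\klex{\Sigma}$-algebras and $\overline{J}$ only reuses functoriality of $J$ and the equality $\klex{\Sigma}J = J\Sigma$. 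As an alternative to the hom-set argument, one can exhibit the lifted unit (the monad unit $\eta_Q\colon(Q,\delta)\to\overline{V}\,\overline{J}(Q,\delta)$, a $\Sigma$-homomorphism by the computation above) and the lifted counit (the Kleisli counit $\id_{SP}\colon\overline{J}\,\overline{V}(P,\gamma)\todot(P,\gamma)$), the triangle identities being inherited for free because the forgetful functors are faithful and the identities already hold in $\C$ and $\Kl(S)$.
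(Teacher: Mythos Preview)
Your proposal is correct. Note, however, that the paper does not actually prove this lemma: it simply states that the result ``follows from~\cite[Theorem 2.14]{HermidaJ98}'' and leaves it at that. So there is no proof in the paper to compare against in detail.

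That said, your approach is exactly the kind of unpacking one would do to extract the concrete statement from the cited general theorem. The Hermida--Jacobs result is phrased fibrationally: liftings of a functor to categories of algebras correspond to certain natural transformations, and adjunctions transport along mates. You are carrying this out by hand in the specific case at issue, and your key observation---that the two hom-sets $\Alg(\klex{\Sigma})(\overline{J}(Q,\delta),(P,\gamma))$ and $\Alg(\Sigma)((Q,\delta),\overline{V}(P,\gamma))$ are literally the same subset of $\C(Q,SP)$---is both correct and the cleanest way to see the adjunction directly. Your identification of the one non-formal step (functoriality of $\overline{V}$, requiring the $\mu$-compatibility axiom of $\lambda$) is also accurate. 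Compared to invoking the abstract theorem, your argument is more elementary and self-contained, at the cost of being tied to this particular instance; the citation buys generality and brevity but hides precisely the computation you spell out.
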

From Lemma~\ref{lm:adjunction-lift}, and using that free algebras can be obtained as colimits of transfinite sequences~\cite{adamek1974free,Kelly80},
it follows that any free $\Sigma$-algebra $(\freealg{\Sigma}X,\alpha_X)$ is mapped by the left adjoint $\overline{J}$ to a free $\klex{\Sigma}$-algebra with the same carrier. Concretely, given a $\klex{\Sigma}$-algebra $(Q, \delta)$ and a morphism $i \colon I \todot Q$, a (unique) morphism $\reach \colon \freealg{\Sigma}I \to SQ$ makes the diagram on the left commute in $\C$ iff it makes the diagram on the right commute in $\Kl(S)$:
	\begin{align}\label{eq:free-alg-kleisli}
		\begin{tikzcd}[ampersand replacement=\&,column sep=1.2cm,row sep=.8cm]
			\Sigma\freealg{\Sigma}I \ar{r}{\alpha} \ar{d}[swap]{\Sigma\reach} \&
				\freealg{\Sigma}I \ar{d}[swap]{\reach} \&
				I \ar{l}[swap]{\eta} \ar{ld}{i} \&
				\klex{\Sigma}\freealg{\Sigma}I \arkl{r}{\overline{J}(\freealg{\Sigma}I, \alpha)} \arkls{d}{\klex{\Sigma}\reach} \&
					\freealg{\Sigma}I \arkls{d}{\reach} \&
					I \arkls{l}{J\eta} \arkl{ld}{i} \\
			\Sigma SQ \ar{r}{\overline{V}(Q, \delta)} \&
				SQ \&\&
				\klex{\Sigma}Q \arkl{r}{\delta} \&
					Q
			\end{tikzcd}
	\end{align}
	Note that the adjoint transpose of $\reach$ is the same morphism, seen in $\Kl(S)$.
	The functor $\overline{V}$ can be viewed as a \emph{determinisation} construction
	for $(\Sigma, S)$-tree automata.
	\begin{definition}
	Given an $(\Sigma, S)$-tree automaton $(Q,\delta,i,o)$, let $\overline{\delta} \colon \Sigma S(Q) \rightarrow S(Q)$ be the algebra structure of $\overline{V}(Q,\delta)$, i.e.,
	$(S(Q), \overline{\delta}) =
	\overline{V}(Q,\delta)$, and $\overline{o} = V(o)$.
	The $\Sigma$-tree automaton $(SQ, \overline{\delta},i,\overline{o})$ is called the \emph{determinisation} of $(Q,\delta,i,o)$.
	\end{definition}
	The following shows correctness of this determinisation construction,
	using the correspondence in~\eqref{eq:free-alg-kleisli}, and provides a concrete description of the language semantics of $(\Sigma, S)$-tree automata.

	\begin{corollary}%
		\label{cor:det-lang}
		Let $(Q,\delta,i,o)$ be a $(\Sigma,S)$-tree automaton. Then $\lang(SQ, \overline{\delta},i,\overline{o}) = \lang(Q,\delta,i,o)$.
	\end{corollary}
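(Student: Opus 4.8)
The plan is to show that $\lang(SQ,\overline{\delta},i,\overline{o})$ and $\lang(Q,\delta,i,o)$ are literally the same morphism $\freealg{\Sigma}I \to SO$ in $\C$, once both definitions are unfolded and the reachability maps of the two automata are identified via the correspondence~\eqref{eq:free-alg-kleisli}.

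First I would compute $\lang(\aut)$ for $\aut = (Q,\delta,i,o)$. By definition it equals $o \circ \reach_\aut$ computed in $\Kl(S)$, where $\reach_\aut \colon \freealg{\Sigma}I \todot Q$ is the unique $\klex{\Sigma}$-algebra homomorphism out of the free $\klex{\Sigma}$-algebra $\overline{J}(\freealg{\Sigma}I,\alpha)$ into $(Q,\delta)$ taking the universal arrow $J\eta_I$ to $i$. Writing $r \colon \freealg{\Sigma}I \to SQ$ for the $\C$-morphism underlying this Kleisli arrow and unfolding Kleisli composition (via $g \circ f = \mu \circ Sg \circ f$, with $\mu$ the multiplication of $S$), we get $\lang(\aut) = \mu_O \circ So \circ r$ in $\C$. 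Next I would compute $\lang(SQ,\overline{\delta},i,\overline{o})$, which is the language of an ordinary $\Sigma$-tree automaton in $\C$ and hence equals $\overline{o} \circ U(i^\sharp)$, where $U(i^\sharp) \colon \freealg{\Sigma}I \to SQ$ is the unique $\Sigma$-algebra homomorphism $(\freealg{\Sigma}I,\alpha) \to \overline{V}(Q,\delta) = (SQ,\overline{\delta})$ with $U(i^\sharp) \circ \eta_I = i$; since $\overline{o} = V(o) = \mu_O \circ So$ by definition of the Kleisli right adjoint, this gives $\lang(SQ,\overline{\delta},i,\overline{o}) = \mu_O \circ So \circ U(i^\sharp)$.

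It then only remains to identify $U(i^\sharp) = r$, which is exactly what~\eqref{eq:free-alg-kleisli} provides: by its defining property $U(i^\sharp)$ is the morphism $\reach$ making the left-hand diagram of~\eqref{eq:free-alg-kleisli} commute in $\C$, so by the equivalence stated there the same morphism, viewed in $\Kl(S)$, makes the right-hand diagram commute; but the latter says precisely that it is the reachability map $\reach_\aut$, whose underlying $\C$-morphism is $r$ (as also remarked directly after~\eqref{eq:free-alg-kleisli}, the adjoint transpose of $\reach$ is the same morphism seen in $\Kl(S)$). Substituting $U(i^\sharp) = r$ then yields $\lang(SQ,\overline{\delta},i,\overline{o}) = \mu_O \circ So \circ r = \lang(\aut)$. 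I do not anticipate a genuine obstacle here: the argument is pure bookkeeping between $\C$ and $\Kl(S)$, the only point deserving a moment's care being that post-composition with $o$ in $\Kl(S)$ denotes the same $\C$-morphism as post-composition with $\overline{o} = V(o)$, which is immediate from $V(f) = \mu \circ Sf$.
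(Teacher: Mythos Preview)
Your proposal is correct and follows exactly the approach the paper intends. The paper does not spell out a proof of this corollary beyond pointing to the correspondence~\eqref{eq:free-alg-kleisli}; your argument is precisely the bookkeeping that this reference stands for, identifying the reachability map of the determinisation with the underlying $\C$-morphism of the Kleisli reachability map and then noting that post-composition with $o$ in $\Kl(S)$ coincides with post-composition with $\overline{o}=V(o)=\mu_O\circ So$ in $\C$.
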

    We conclude this section with some example instantiations of determinisation, and of how they can be used to compute languages.
	\begin{example}\hfill
	\begin{enumerate}
		\item\label{nd-det} The determinisation of a $(\Sigma, \fpow)$-tree automaton $(Q,\delta,i,o)$ is
	\[
		\overline{\delta}_\gamma (X_1,\dots,X_k) = \bigcup_{x_1 \in X_1,\dots,x_k \in X_k} \delta_\gamma(x_1,\dots,x_k) \qquad \overline{o}(X) = \bigcup_{x \in X} o(x)
	\]
	for $\gamma \in \Gamma$ with $k = \arity(\gamma)$, and $X_1,\dots,X_k,X$ finite subsets of $Q$. This definition precisely corresponds to the usual determinisation of bottom-up tree automata (see e.g.~\cite{HabrardO06}).
    The reachability function is then given by
	\[
		\reach(t) =
		\begin{cases}
			i(t) & \text{if $t \in I$} \\
			\bigcup_{\substack{x_1 \in \reach(t_1) \\ \dots \\ x_k \in \reach(t_k)}} \delta_\gamma(x_1,\dots,x_k) & \text{if $t = (\gamma,x_1,\dots,x_k), k = \arity(\gamma)$}.
		\end{cases}
    \]
    Using Corollary~\ref{cor:det-lang},
    we have that the language of $(Q,\delta,i,o)$ is
    \[
		\lang(Q,\delta,i,o)(t) = \bigcup_{s \in \reach(t)} o(s).
	\]
	That is: a tree $t$ is accepted by $(Q,\delta,i,o)$ whenever there is a final state among those reached by parsing $t$.

	\item The determinisation of a $(\Sigma, \mult_\fld)$-tree automaton is given by
	\[
		\overline{\delta}_\gamma(\varphi_1,\dots,\varphi_k) = ( \varphi_1 \otimes \dots \otimes \varphi_k ) \bullet \delta_\gamma
		\qquad
		\overline{o}(\varphi) = \varphi \bullet o
	\]
	where $\gamma \in \Gamma$ and $\arity(\gamma) = k$; also, $\otimes$ is the Kronecker product and $\bullet$ is matrix multiplication. Explicitly, $\overline{\delta}_\gamma$ takes a $k$-tuple of vectors over $Q$ and turns it into a vector $\varphi$ over $k$-tuples of states via the distributive law (defined as the Kronecker product, see Example~\ref{ex:distr-laws}), i.e., $\varphi(q_1,\dots,q_k) = \varphi_1(q_1) \cdots \varphi_k(q_k)$, for $q_1,\dots,q_k \in Q$. The result is then multiplied by the matrix $\delta_\gamma$ to compute the successor vector.

    Similarly, the reachability function becomes
	\[
		\reach(t) =
		\begin{cases}
			i(t) & \text{if $t \in I$} \\
			(\reach(t_1) \otimes \dots \otimes \reach(t_k)) \bullet \delta_\gamma & \text{if $t = (\gamma,t_1,\dots,t_k), k = \arity(\gamma)$}.
		\end{cases}
	\]
    Hence we obtain the language $\lang(Q,\delta,i,o)(t) = \reach(t) \bullet o$, which corresponds to the language semantics given in~\cite{KieferMW14}.

    \item The case of $(\Sigma_\lambda, \fspow)$-tree automata is completely analogous to point~\ref{nd-det}. For instance,
	\[
		\overline{\delta}_{\mathsf{lambda}}(X) = \bigcup_{x \in X} \delta_{\mathsf{lambda}}(x)
	\]
	for $X$ a finitely supported orbit-finite subset of $[\atoms]Q$.
	\end{enumerate}

	\end{example}

\section{Future work}%
\label{sec:conclusion}

The algorithmic side of the iterative minimisation construction presented in Section~\ref{sec:minimisation} is left open.
For classical tree automata there exist sophisticated variants of partition refinement~\cite{HogbergMM09,AbdullaHK07},
akin to Hopcroft's classical algorithm. A generalisation to the current algebraic setting is an interesting
direction of research, for which a natural starting point would be to try and integrate in our setting the efficient coalgebraic
algorithm presented in~\cite{DorschMSW17}.

Further, we characterised the minimal automaton as the greatest fixed point of a monotone function, recovering the notion of forward bisimulations as its
post-fixed points (although it is perhaps more natural to think of these as congruences).
This characterisation suggests an integration with up-to techniques~\cite{pous:dsbook11,BonchiPPR17,BonchiP15}, which have, to the best of our knowledge,
not been applied to tree automata.
In particular, we are interested in applying these algorithms to decide equivalence of series-parallel rational and series-rational expressions~\cite{lodaya-weil-2000}.

Since completeness of Kleene Algebra is connected to minimality of deterministic finite automata~\cite{kozen-1994}, we wonder whether a completeness proof can be recovered using automata as presented in this paper.
In particular, our abstract framework might allow us to transpose such a proof to settings such as Bi-Kleene Algebra~\cite{laurence-struth-2014} or Concurrent Kleene Algebra~\cite{kappe-brunet-silva-zanasi-2018}.

\bibliographystyle{abbrv}
\bibliography{calco}

\begin{thebibliography}{10}

\bibitem{AbdullaHK07}
P.~A. Abdulla, J.~H{\"{o}}gberg, and L.~Kaati.
\newblock Bisimulation minimization of tree automata.
\newblock {\em Int. J. Found. Comput. Sci.}, 18(4):699--713, 2007.

\bibitem{adamek1974free}
J.~Ad{\'a}mek.
\newblock Free algebras and automata realizations in the language of
  categories.
\newblock {\em Commentationes Mathematicae Universitatis Carolinae},
  15(4):589--602, 1974.

\bibitem{ADAMEK1977281}
J.~Ad{\'a}mek.
\newblock Realization theory for automata in categories.
\newblock {\em J. Pure and Appl. Algebra}, 9(2):281 -- 296, 1977.

\bibitem{DBLP:conf/fossacs/AdamekBHKMS12}
J.~Ad{\'{a}}mek, F.~Bonchi, M.~H{\"{u}}lsbusch, B.~K{\"{o}}nig, S.~Milius, and
  A.~Silva.
\newblock A coalgebraic perspective on minimization and determinization.
\newblock In {\em FoSSaCS}, pages 58--73, 2012.

\bibitem{cats}
J.~Ad{\'{a}}mek, H.~Herrlich, and G.~Strecker.
\newblock {\em Abstract and concrete categories: The joy of cats}, volume~17 of
  {\em Reprints in Theory and Applications of Categories}.
\newblock TAC, 2006.

\bibitem{AT89}
J.~Ad{\'{a}}mek and V.~Trnkov{\'a}.
\newblock {\em Automata and algebras in categories}.
\newblock Kluwer, 1989.

\bibitem{anderson1976}
B.~D. Anderson, M.~A. Arbib, and E.~G. Manes.
\newblock {\em Foundations of system theory: finitary and infinitary
  conditions}, volume 115 of {\em Lecture Notes in Econ. and Math. Syst.}
\newblock Springer, 1976.

\bibitem{arbib1974}
M.~A. Arbib and E.~G. Manes.
\newblock Machines in a category: An expository introduction.
\newblock {\em SIAM review}, 16(2):163--192, 1974.

\bibitem{AM75}
M.~A. Arbib and E.~G. Manes.
\newblock Adjoint machines, state-behavior machines, and duality.
\newblock {\em Journal of Pure and Applied Algebra}, 6(3):313--344, 1975.

\bibitem{BKR19}
S.~Barlocco, C.~Kupke, and J.~Rot.
\newblock Coalgebra learning via duality.
\newblock In {\em FoSSaCS}, pages 62--79, 2019.

\bibitem{alwin}
A.~Blok.
\newblock Interaction, observation and denotation.
\newblock Master's thesis, ILLC Amsterdam, 2012.

\bibitem{bojanczyk2015}
M.~Boja{\'n}czyk.
\newblock Recognisable languages over monads.
\newblock In {\em DLT}, pages 1--13. Springer, 2015.

\bibitem{BojanczykKL14}
M.~Boja{\'{n}}czyk, B.~Klin, and S.~Lasota.
\newblock Automata theory in nominal sets.
\newblock {\em Logical Methods in Computer Science}, 10(3), 2014.

\bibitem{BonchiPPR17}
F.~Bonchi, D.~Petrisan, D.~Pous, and J.~Rot.
\newblock A general account of coinduction up-to.
\newblock {\em Acta Inf.}, 54(2):127--190, 2017.

\bibitem{BonchiP15}
F.~Bonchi and D.~Pous.
\newblock Hacking nondeterminism with induction and coinduction.
\newblock {\em Commun. {ACM}}, 58(2):87--95, 2015.

\bibitem{borceux1994}
F.~Borceux.
\newblock {\em Handbook of Categorical Algebra}, volume~1 of {\em Encyclopedia
  of Mathematics and its Applications}.
\newblock Cambridge University Press, 1994.

\bibitem{BorchardtV03}
B.~Borchardt and H.~Vogler.
\newblock Determinization of finite state weighted tree automata.
\newblock {\em Journal of Automata, Languages and Combinatorics},
  8(3):417--463, 2003.

\bibitem{DorschMSW17}
U.~Dorsch, S.~Milius, L.~Schr{\"{o}}der, and T.~Wi{\ss}mann.
\newblock Efficient coalgebraic partition refinement.
\newblock In {\em CONCUR}, pages 32:1--32:16, 2017.

\bibitem{drewes2007}
F.~Drewes and J.~H{\"o}gberg.
\newblock Query learning of regular tree languages: How to avoid dead states.
\newblock {\em Theory of Computing Systems}, 40(2):163--185, 2007.

\bibitem{Fiore12}
M.~P. Fiore.
\newblock Discrete generalised polynomial functors (extended abstract).
\newblock In {\em ICALP}, pages 214--226, 2012.

\bibitem{GabbayM09}
M.~J. Gabbay and A.~Mathijssen.
\newblock Nominal (universal) algebra: Equational logic with names and binding.
\newblock {\em J. Log. Comput.}, 19(6):1455--1508, 2009.

\bibitem{HabrardO06}
A.~Habrard and J.~Oncina.
\newblock Learning multiplicity tree automata.
\newblock In {\em ICGI}, volume 4201 of {\em Lecture Notes in Computer
  Science}, pages 268--280. Springer, 2006.

\bibitem{HasuoJS07}
I.~Hasuo, B.~Jacobs, and A.~Sokolova.
\newblock Generic trace semantics via coinduction.
\newblock {\em Logical Methods in Computer Science}, 3(4), 2007.

\bibitem{HermidaJ98}
C.~Hermida and B.~Jacobs.
\newblock Structural induction and coinduction in a fibrational setting.
\newblock {\em Inf. Comput.}, 145(2):107--152, 1998.

\bibitem{HogbergMM09}
J.~H{\"{o}}gberg, A.~Maletti, and J.~May.
\newblock Backward and forward bisimulation minimization of tree automata.
\newblock {\em Theoretical Computer Science}, 410(37):3539--3552, 2009.

\bibitem{Holcombe:1982:AAT:539040}
W.~M. Holcombe.
\newblock {\em Algebraic Automata Theory}.
\newblock Cambridge University Press, 1982.

\bibitem{kappe-brunet-silva-zanasi-2018}
T.~Kapp{\'{e}}, P.~Brunet, A.~Silva, and F.~Zanasi.
\newblock Concurrent {K}leene algebra: Free model and completeness.
\newblock In {\em ESOP}, pages 856--882, 2018.

\bibitem{Kelly80}
G.~M. Kelly.
\newblock A unified treatment of transfinite constructions for free algebras,
  free monoids, colimits, associated sheaves, and so on.
\newblock {\em Bull. Austr. Math. Soc.}, 22(1):1–83, 1980.

\bibitem{KieferMW14}
S.~Kiefer, I.~Marusic, and J.~Worrell.
\newblock Minimisation of multiplicity tree automata.
\newblock {\em Logical Methods in Computer Science}, 13(1), 2017.

\bibitem{KlinR16}
B.~Klin and J.~Rot.
\newblock Coalgebraic trace semantics via forgetful logics.
\newblock {\em Logical Methods in Computer Science}, 12(4), 2016.

\bibitem{Kock1970}
A.~Kock.
\newblock Monads on symmetric monoidal closed categories.
\newblock {\em Arch. der Math.}, 21(1):1--10, 1970.

\bibitem{DBLP:conf/ifipTCS/KonigK14}
B.~K{\"{o}}nig and S.~K{\"{u}}pper.
\newblock Generic partition refinement algorithms for coalgebras and an
  instantiation to weighted automata.
\newblock In {\em Theory Comput. Syst.}, pages 311--325, 2014.

\bibitem{kozen-1994}
D.~Kozen.
\newblock A completeness theorem for {K}leene algebras and the algebra of
  regular events.
\newblock {\em Inf. Comput.}, 110(2):366--390, 1994.

\bibitem{KurzP10}
A.~Kurz and D.~Petrisan.
\newblock On universal algebra over nominal sets.
\newblock {\em Mathematical Structures in Computer Science}, 20(2):285--318,
  2010.

\bibitem{lack2011}
S.~Lack and J.~Rosick{\`y}.
\newblock Notions of {L}awvere theory.
\newblock {\em Appl. Categ. Struct.}, 19(1):363--391, 2011.

\bibitem{maclane2013}
S.~M. Lane.
\newblock {\em Categories for the working mathematician}, volume~5.
\newblock Springer, 2013.

\bibitem{laurence-struth-2014}
M.~R. Laurence and G.~Struth.
\newblock Completeness theorems for bi-{K}leene algebras and series-parallel
  rational pomset languages.
\newblock In {\em RAMiCS}, pages 65--82, 2014.

\bibitem{lodaya-weil-2000}
K.~Lodaya and P.~Weil.
\newblock Series-parallel languages and the bounded-width property.
\newblock {\em Theoretical Computer Science}, 237(1):347--380, 2000.

\bibitem{Mulry93}
P.~S. Mulry.
\newblock Lifting theorems for {K}leisli categories.
\newblock In {\em MFPS}, pages 304--319, 1993.

\bibitem{Pin}
J.~E. Pin.
\newblock Mathematical foundations of automata theory.
\newblock Version of March 13, 2019.

\bibitem{Pitts13}
A.~M. Pitts.
\newblock {\em Nominal Sets: Names and Symmetry in Computer Science}.
\newblock Cambridge University Press, 2013.

\bibitem{pous:dsbook11}
D.~Pous and D.~Sangiorgi.
\newblock {\em Advanced Topics in Bisimulation and Coinduction}, chapter
  Enhancements of the coinductive proof method.
\newblock Cambridge University Press, 2011.

\bibitem{DBLP:conf/concur/Rutten98}
J.~J. M.~M. Rutten.
\newblock Automata and coinduction (an exercise in coalgebra).
\newblock In {\em CONCUR}, pages 194--218, 1998.

\bibitem{sakakibara1990}
Y.~Sakakibara.
\newblock Learning context-free grammars from structural data in polynomial
  time.
\newblock {\em Theoretical Computer Science}, 76(2-3):223--242, 1990.

\bibitem{DBLP:journals/corr/abs-1302-1046}
A.~Silva, F.~Bonchi, M.~M. Bonsangue, and J.~J. M.~M. Rutten.
\newblock Generalizing determinization from automata to coalgebras.
\newblock {\em Logical Methods in Computer Science}, 9(1), 2013.

\bibitem{reachability19}
T.~Wi{\ss}mann, S.~Milius, S.~Katsumata, and J.~Dubut.
\newblock A coalgebraic view on reachability.
\newblock {\em arXiv e-prints}, Jan 2019.
\newblock \url{https://arxiv.org/abs/1901.10717}.

\end{thebibliography}

\ifarxiv%

\clearpage
\appendix

\section{Proofs for Section~\ref{sec:nerode}}

In the proofs below, we will use the following basic adjunction properties, in particular for the adjunction $F \dashv U \colon \C \leftrightarrows \EM(T)$ with adjoint transpose ${(-)}^\sharp$:
\begin{itemize}
	\item
		The transpose $f^\sharp \colon FA \to B$ for $f \colon A \to UB$ in $\C$ can be defined as $f^\sharp = y \circ Tf$, where $y$ is the $T$-algebra structure on $Y$.
	\item
		For all $f \colon X \to UY$ and $g \colon UY \to UZ$ in $\C$ we have $U(g^\sharp) \circ Tf = U({(g \circ f)}^\sharp)$.
	\item
		We have $U(\eta_X^\sharp) = \id_{TX}$.
\end{itemize}

We need the following additional lemmas in the proofs below.

\begin{lemma}\label{lem:algcomp}
	If $f \colon A \to B$ and $h \colon A \to C$ in $\EM(T)$ are such that there exists $g \colon UB \to UC$ in $\C$ with $g \circ f = h$ and $Tf$ is an epi, then $g$ is a $T$-algebra homomorphism $B \to C$.
\end{lemma}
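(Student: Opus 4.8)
The plan is a short diagram chase that exploits the epi hypothesis on $Tf$. Write $A = (A,a)$, $B = (B,b)$, and $C = (C,c)$ for the three $T$-algebras, so that the goal is to establish $g \circ b = c \circ Tg \colon TB \to C$. Since $f \colon A \to B$ and $h \colon A \to C$ are $T$-algebra homomorphisms we have $b \circ Tf = f \circ a$ and $c \circ Th = h \circ a$; and since $g \circ f = h$, functoriality of $T$ gives $Tg \circ Tf = Th$.

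First I would precompose the desired identity with $Tf$ and compute
\[
g \circ b \circ Tf = g \circ f \circ a = h \circ a = c \circ Th = c \circ Tg \circ Tf,
\]
where the first step uses that $f$ is a homomorphism, the second that $g \circ f = h$, the third that $h$ is a homomorphism, and the last that $Th = Tg \circ Tf$. Thus $g \circ b$ and $c \circ Tg$ agree after precomposition with $Tf$, and since $Tf$ is an epimorphism we may cancel it to conclude $g \circ b = c \circ Tg$, i.e.\ $g$ is a $T$-algebra homomorphism $B \to C$. There is no real obstacle here; the only point worth noting is that the epi assumption on $Tf$ is used precisely to perform the cancellation at the end, so the argument would not go through with $f$ alone assumed epi.
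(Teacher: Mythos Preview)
Your argument is correct and is essentially identical to the paper's proof: both precompose the desired equality with $Tf$, chase through the homomorphism laws for $f$ and $h$ together with $g \circ f = h$ and $Tg \circ Tf = Th$, and then cancel the epi $Tf$. The only cosmetic difference is that the paper presents the chase as a commutative diagram rather than an equational chain.
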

\begin{proof}
	Let $\alpha \colon TA \to A$, $\beta \colon TB \to B$, and $\gamma \colon TC \to C$ be the respective $T$-algebra structures on $A$, $B$, and $C$.
	By commutativity of
	\[
		\begin{tikzcd}[column sep=1.1cm, row sep=.4cm]
			TA \ar{rr}{Tf} \ar{rrd}{Th} \ar{rd}[swap]{\alpha} \ar{dd}[swap]{Tf} &
				&
				TB \ar{d}{Tg} \\
			&
				A \ar{rd}{h} \ar{d}[swap]{f} &
				TC \ar{d}{\gamma} \\
			TB \ar{r}{\beta} &
				B \ar{r}{g} &
				C
		\end{tikzcd}
	\]
	and $Tf$ being an epi, we directly conclude that $g$ is a $T$-algebra homomorphism $B \to C$.
\end{proof}

\begin{lemma}\label{lem:extendnerode}
	Given a language $L$ and $q_1, q_2 \colon S \to TI$ making the diagram below on the left commute, the diagram on the right commutes.
	\begin{align*}
		\begin{tikzcd}[ampersand replacement=\&,column sep=.6cm,row sep=.4cm]
			TS \ar{rr}{T{q_2}} \ar{dd}[swap]{T{q_1}} \&
				\&
				TTI \ar{d}{\mu} \\
			\&
				\&
				TI \ar{d}{L} \\
			TTI \ar{r}{\mu} \&
				TI \ar{r}{L} \&
				O
		\end{tikzcd} &
			&
			\begin{tikzcd}[ampersand replacement=\&,column sep=.8cm,row sep=.4cm]
				TTS \ar{r}{TTq_2} \ar{d}[swap]{TTq_1} \&
					TTTI \ar{r}{T\mu} \&
					TTI \ar{d}{\mu} \\
				TTTI \ar{d}[swap]{T\mu} \&
					\&
					TI \ar{d}{L} \\
				TTI \ar{r}{\mu} \&
					TI \ar{r}{L} \&
					O
			\end{tikzcd}
	\end{align*}
	Furthermore, if $(q_1, q_2)$ is a reflexive pair, then so is $(\mu_I \circ Tq_1, \mu_I \circ Tq_2)$.
\end{lemma}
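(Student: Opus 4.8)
The plan is to derive the right-hand diagram from the left-hand one purely by equational reasoning with the monad laws, and to obtain the reflexivity claim from a common section of $(q_1,q_2)$.

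For the commuting square, I would first record that commutativity of the left-hand diagram is precisely the equation $L \circ \mu_I \circ Tq_1 = L \circ \mu_I \circ Tq_2 \colon TS \to O$, while the goal is $L \circ \mu_I \circ T\mu_I \circ TTq_1 = L \circ \mu_I \circ T\mu_I \circ TTq_2 \colon TTS \to O$. The key observation is that simply applying $T$ to the hypothesis is not what we want---it produces $TL$ rather than $L$; instead I would whisker the hypothesis on the right by $\mu_S \colon TTS \to TS$ and rewrite the composite $L \circ \mu_I \circ Tq_j \circ \mu_S$. Using naturality of $\mu$ at $q_j \colon S \to TI$ we have $Tq_j \circ \mu_S = \mu_{TI} \circ TTq_j$, and using the associativity law $\mu \circ T\mu = \mu \circ \mu_T$ instantiated at $I$ we have $\mu_I \circ \mu_{TI} = \mu_I \circ T\mu_I$. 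Chaining these gives $L \circ \mu_I \circ Tq_j \circ \mu_S = L \circ \mu_I \circ T\mu_I \circ TTq_j$ for $j = 1,2$; since the left-hand sides agree by hypothesis, so do the right-hand sides, which is exactly commutativity of the right-hand diagram.

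For the second claim, suppose $s \colon TI \to S$ is a common section of $q_1$ and $q_2$, i.e.\ $q_1 \circ s = \id_{TI} = q_2 \circ s$. I would show that $\eta_S \circ s \colon TI \to TS$ is a common section of $\mu_I \circ Tq_1$ and $\mu_I \circ Tq_2$. Indeed, by naturality of $\eta$ at $q_j \colon S \to TI$ we get $Tq_j \circ \eta_S = \eta_{TI} \circ q_j$, so $\mu_I \circ Tq_j \circ \eta_S \circ s = \mu_I \circ \eta_{TI} \circ q_j \circ s = \mu_I \circ \eta_{TI} = \id_{TI}$, the last step by the unit law $\mu \circ \eta_T = \id$. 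Hence $(\mu_I \circ Tq_1, \mu_I \circ Tq_2)$ is a reflexive pair.

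I expect no substantial obstacle here: the argument is a short diagram chase with the monad axioms, and the only mild subtlety is recognising that one should whisker the hypothesis by $\mu_S$ (equivalently, exploit naturality of $\mu$ together with associativity) rather than naively apply $T$ to it.
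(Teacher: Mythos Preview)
Your argument is correct and matches the paper's proof: for the first claim the paper also precomposes the hypothesis with $\mu_S$ and invokes naturality of $\mu$ together with the associativity law, and for the second claim the paper observes that $(\mu_I \circ Tq_1, \mu_I \circ Tq_2)$ is a composite of the reflexive pairs $(Tq_1, Tq_2)$ and $(\mu_I, \mu_I)$, which unwinds to exactly your section $\eta_S \circ s$ (since $Ts \circ \eta_{TI} = \eta_S \circ s$ by naturality of $\eta$).
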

\begin{proof}
	We extend the assumption to the following commutative diagram.
	\[
		\begin{tikzcd}[ampersand replacement=\&,row sep=.4cm]
			TTS \ar{rr}{TT{q_2}} \ar{rd}{\mu} \ar{dd}[swap]{TT{q_1}} \ar[phantom,bend left=5]{drr}[pos=.6]{\circled{1}} \ar[phantom,bend right=20]{ddr}[pos=.6]{\circled{1}} \&
				\&
				TTTI \ar{r}{T\mu} \ar{d}{\mu} \&
				TTI \ar{dd}{\mu} \\
			\&
				TS \ar{r}{T{q_2}} \ar{d}[swap]{T{q_1}} \&
				TTI \ar{rd}{\mu} \ar[phantom]{ru}{\circled{2}} \&
				\\
			TTTI \ar{d}[swap]{T\mu} \ar{r}{\mu} \&
				TTI \ar{rd}{\mu} \&
				\&
				TI \ar{d}{L} \\
			TTI \ar{rr}{\mu} \ar[phantom]{ru}{\circled{2}} \&
				\&
				TI \ar{r}{L} \&
				O
		\end{tikzcd}
		\qquad
		\begin{array}{l}
			\circled{1}\text{ naturality of $\mu$} \\
			\circled{2}\text{ monad law}
		\end{array}
	\]
	As for reflexivity, $(\mu_I \circ Tq_1, \mu_I \circ Tq_2)$ is the composition of the reflexive pairs $(\mu_I, \mu_I)$ and $(Tq_1, Tq_2)$.
\end{proof}

\coequalizerlift*
\begin{proof}
	For $k \in \{1, 2\}$ we have $
		U(i^\sharp \circ q_k^\sharp) \circ \eta_A = \reach \circ U(q_k^\sharp) \circ \eta_A = \reach \circ q_k$,
	so by $U(i^\sharp)$ coequalising $q_1$ and $q_2$ we have $i^\sharp \circ q_1^\sharp = i^\sharp \circ q_2^\sharp$.
	If a $T$-algebra homomorphism $f \colon TB \to Z$ is such that $f \circ q_1^\sharp = f \circ q_2^\sharp$, then
	\[
		Uf \circ q_1 = Uf \circ U(q_1^\sharp) \circ \eta_A = Uf \circ U(q_2^\sharp) \circ \eta_A = Uf \circ q_2,
	\]
	which because $U(i^\sharp)$ coequalises $q_1$ and $q_2$ yields a unique function $u \colon UC \to UZ$ such that $u \circ \reach = f$.
	Remains to show that $u$ is a $T$-algebra homomorphism.
	Note that since $U(i^\sharp)$ is a reflexive coequaliser, $TU(i^\sharp)$ is an epi by assumption on $T$.
	Precomposing $u$ with $U(i^\sharp)$ yields the $T$-algebra homomorphism $f$, so by $TU(i^\sharp)$ being an epi and Lemma~\ref{lem:algcomp} we conclude $u$ is a $T$-algebra homomorphism $C \to Z$.
	Reflexivity of the pair follows from Lemma~\ref{lem:reflexive}.
\end{proof}

\nerodefin*
\begin{proof}
	For each subset $X \subseteq R$, we define $p_X \colon R \to TI$ by
	\[
		p_X(r) = \begin{cases}
			p_1(r) &
				\text{if $r \not\in X$} \\
			p_2(r) &
				\text{if $r \in X$}.
		\end{cases}
	\]
	We have $Tp_1 = Tp_\emptyset$ by definition.
	Consider any $t \in TR$ and let a finite $E \subseteq R$ with inclusion map $e \colon E \to R$ and $t' \in TE$ be such that $T(e)(t') = t$.
	These exist because $T$ is finitary.
	We will show by induction on $E$ that
	\begin{equation}\label{eq:subind}
		(L \circ \mu \circ Tp_\emptyset)(t) = (L \circ \mu \circ Tp_E)(t).
	\end{equation}
	The case where $E = \emptyset$ is clear, so assume $E = E' \cup \{z\}$ with $z \not\in E'$ and~\eqref{eq:subind} holds when $E'$ is substituted for $E$.
	We fix the singleton $1 = \{\square\}$ and define $d \colon R \to TI + 1$ by
	\[
		d(r) = \begin{cases}
			(\kappa_1 \circ p_1)(r) &
				\text{if $r \not\in E$} \\
			(\kappa_1 \circ p_2)(r) &
				\text{if $r \in E'$} \\
			\kappa_2(\square) &
				\text{if $r = z$},
		\end{cases}
	\]
	where $\kappa_1$ and $\kappa_2$ are the coproduct injections.
	By this definition, we have $[\id_{TI}, 1_{p_1(z)}] \circ d = p_{E'}$ and $[\id_{TI}, 1_{p_2(z)}] \circ d = p_E$, so
	\begin{align*}
		(L \circ \mu \circ Tp_\emptyset)(t) &
			= (L \circ \mu \circ Tp_{E'})(t) &
			&
			\text{(induction hypothesis)} \\
		&
			= (L \circ \mu \circ T([\id_{TI}, 1_{p_1(z)}] \circ d))(t) \\
		&
			= (L \circ \mu \circ T([\id_{TI}, 1_{p_2(z)}] \circ d))(t) &
			&
			\text{(definition of $R$)} \\
		&
			= (L \circ \mu \circ Tp_E)(t),
	\end{align*}
	thus concluding the proof of~\eqref{eq:subind}.
	Now $Tp_1 = Tp_\emptyset$ by definition and
	\[
		Tp_E(t) = T(p_E \circ e)(t') = T(p_2 \circ e)(t') = Tp_2(t),
	\]
	from which we find that
	$
		(L \circ \mu \circ Tp_1)(t) = (L \circ \mu \circ Tp_\emptyset)(t) = (L \circ \mu \circ Tp_E)(t) = (L \circ \mu \circ Tp_2)(t).
	$
	As this argument works for any $t \in TR$, we have $L \circ \mu \circ Tp_1 = L \circ \mu \circ Tp_2$.

	Now consider any set $S$ with $q_1, q_2 \colon S \to TI$ making
	\begin{equation}\label{eq:pre}
		\begin{tikzcd}[column sep=.6cm,row sep=.4cm]
			TS \ar{rr}{T{q_2}} \ar{dd}[swap]{T{q_1}} &
				&
				TTI \ar{d}{\mu} \\
			&
				&
				TI \ar{d}{L} \\
			TTI \ar{r}{\mu} &
				TI \ar{r}{L} &
				O
		\end{tikzcd}
	\end{equation}
	commute, and assume $q_1$ and $q_2$ have a common section $j \colon TI \to S$.
	We define $u \colon S \to R$ by $u(s) = (q_1(s), q_2(s))$.
	To see that this is indeed an element of $R$, note that for $k \in \{1, 2\}$,
	\begin{align*}
		L \circ \mu \circ T[\id_{TI}, 1_{q_k(s)}] &
			= L \circ \mu \circ T[\id_{TI}, q_k \circ 1_{s}] \\
		&
			= L \circ \mu \circ T[q_k \circ j, q_k \circ 1_{s}] &
			&
			\text{(section)} \\
		&
			= L \circ \mu \circ Tq_k \circ T[j, 1_{s}],
	\end{align*}
	and therefore $
		L \circ \mu \circ T[\id_{TI}, 1_{q_1(s)}] = L \circ \mu \circ T[\id_{TI}, 1_{q_2(s)}]$
	follows from~\eqref{eq:pre}.
	By definition, $u$ is the unique map making the diagram below commute.
	\begin{align*}
		\begin{tikzcd}[row sep=.4cm,ampersand replacement=\&]
			\&
				S \ar{ld}[swap]{q_1} \ar[dashed]{d}{u} \ar{rd}{q_2} \\
			TI \&
				R \ar{l}[swap]{p_1} \ar{r}{p_2} \&
				TI
		\end{tikzcd}
        \\[-\normalbaselineskip]\tag*{\qedhere}
	\end{align*}
\end{proof}

\liftnerode*
\begin{proof}
	Let $L \colon TI \to O$ be a language with Nerode equivalence $(R, p_1, p_2)$.
	Then $(p_1, p_2)$ is a reflexive pair by the Nerode equivalence property, since $(\id_{TI}, \id_{TI})$ is a reflexive pair trivially satisfying the Nerode equivalence condition.
	We apply Lemma~\ref{lem:extendnerode} to obtain from the Nerode equivalence property a unique morphism $r \colon TR \to R$ making the diagram below commute.
	\begin{equation}\label{eq:nerodestructure}
		\begin{tikzcd}[column sep=.8cm, row sep=.4cm]
			TTI \ar{d}{\mu} &
				TR \ar{l}[swap]{Tp_1} \ar{r}{Tp_2} \ar[dashed]{d}{r} &
				TTI \ar{d}{\mu} \\
			TTI &
				R \ar{l}[swap]{p_1} \ar{r}{p_2} &
				TI
		\end{tikzcd}
	\end{equation}
	We need to show that $(R, r)$ is a $T$-algebra.
	The first commutative diagram below shows that $r \circ \eta_R$ preserves $p_1$ and $p_2$, so since $\id_R$ also does this we must have $r \circ \eta_R = \id_R$ by the uniqueness property of the Nerode equivalence.
		\[
			\begin{tikzcd}[row sep=.4cm]
				&
					&
					R \ar[bend right=10]{ld}[swap]{p_1} \ar[bend left=10]{rd}{p_2} \ar{dd}{\eta}  \\
				&
					TI \ar[equals,bend right=20]{ldd}{} \ar{d}{\eta} \ar[phantom,bend left=30]{rd}[pos=.4]{\circled{2}} \ar[phantom,bend left=5]{ldd}{\circled{1}} &
					&
					TI \ar[equals,bend left=20]{rdd}{} \ar{d}[swap]{\eta} \ar[phantom,bend right=30]{ld}[pos=.4]{\circled{2}} \ar[phantom,bend right=5]{rdd}{\circled{1}} \\
				&
					TTI \ar{ld}{\mu} \ar[phantom]{rd}{\text{\small\eqref{eq:nerodestructure}}} &
					TR \ar{l}[swap]{Tp_1} \ar{r}{Tp_2} \ar{d}{r} &
					TTI \ar{rd}[swap]{\mu} \ar[phantom]{ld}{\text{\small\eqref{eq:nerodestructure}}} \\
				TI &
					&
					R \ar{ll}[swap]{p_1} \ar{rr}{p_2} &
					&
					TI
			\end{tikzcd}
			\qquad
			\begin{array}{l}
				\circled{1}\text{ monad law} \\
				\circled{2}\text{ naturality of $\eta$} \\
				\circled{3}\text{ naturality of $\mu$}
			\end{array}
		\]%
		\begin{align*}
			\begin{tikzcd}[ampersand replacement=\&]
				TTTI \ar{d}[swap]{T\mu} \ar[phantom,bend left=10]{rd}{\text{\small\eqref{eq:nerodestructure}}} \&
					TTR \ar{l}[swap]{TTp_1} \ar{r}{TTp_2} \ar{d}{Tr} \&
					TTTI \ar{d}{T\mu} \ar[phantom,bend right=10]{ld}{\text{\small\eqref{eq:nerodestructure}}} \\
				TTI \ar{d}[swap]{\mu} \ar[phantom,bend left=10]{rd}{\text{\small\eqref{eq:nerodestructure}}} \&
					TR \ar{l}[swap]{Tp_1} \ar{r}{Tp_2} \ar{d}{r} \&
					TTI \ar{d}{\mu} \ar[phantom,bend right=10]{ld}{\text{\small\eqref{eq:nerodestructure}}} \\
				TI \&
					R \ar{l}[swap]{p_1} \ar{r}{p_2} \&
					TI
			\end{tikzcd} &
				&
				\begin{tikzcd}[ampersand replacement=\&,column sep=.5cm]
					TTTI \ar{d}[swap]{T\mu} \ar{dr}{\mu} \ar[phantom,bend left=10]{rrd}[pos=.6]{\circled{3}} \&
						\&
						TTR \ar{ll}[swap]{TTp_1} \ar{rr}{TTp_2} \ar{d}{\mu} \&
						\&
						TTTI \ar{d}{T\mu} \ar{dl}[swap]{\mu} \ar[phantom,bend right=10]{lld}[pos=.6]{\circled{3}} \\
					TTI \ar{d}[swap]{\mu} \ar[phantom]{r}{\circled{1}} \&
						TTI \ar{dl}{\mu} \ar[phantom,bend right=15]{rd}[pos=.4]{\text{\small\eqref{eq:nerodestructure}}} \&
						TR \ar{l}[swap]{Tp_1} \ar{r}{Tp_2} \ar{d}{r} \&
						TTI \ar{dr}[swap]{\mu} \ar[phantom]{r}{\circled{1}} \ar[phantom,bend left=15]{ld}[pos=.4]{\text{\small\eqref{eq:nerodestructure}}} \&
						TTI \ar{d}{\mu} \\
					TI \&
						\&
						R \ar{ll}[swap]{p_1} \ar{rr}{p_2} \&
						\&
						TI
				\end{tikzcd}
		\end{align*}
		As for the other two, we use a double application of Lemma~\ref{lem:extendnerode} to see that the pair $(\mu \circ T\mu \circ TTp_1, \mu \circ T\mu \circ TTp_2)$ satisfies the Nerode equivalence conditions.
		Commutativity of the two diagrams then shows that both $r \circ Tr$ and $r \circ \mu$ are the unique map commuting with the pairs $(\mu \circ T\mu \circ TTp_1, \mu \circ T\mu \circ TTp_2)$ and $(p_1, p_2)$, so they must be equal and $(TR, r)$ is a $T$-algebra.

		It remains to show that $p_1$ and $p_2$ have a common section in $\EM(T)$.
		To this end, note that $([\eta_I, \id_{TI}], [\eta_I, \id_{TI}])$ is a reflexive pair trivially satisfying the Nerode equivalence condition.
		Thus, we obtain by the Nerode equivalence property a unique morphism $u \colon I + TI \to R$ making
		\begin{align*}
			\begin{tikzcd}[row sep=.5cm,ampersand replacement=\&]
				\&
					I + TI \ar{ld}[swap]{[\eta, \id]} \ar[dashed]{d}{u} \ar{rd}{[\eta, \id]} \\
				TI \&
					R \ar{l}[swap]{p_1} \ar{r}{p_2} \&
					TI
			\end{tikzcd}
		\end{align*}
		commute.
		Then for $k \in \{1, 2\}$,
		\[
			p_k \circ {(u \circ \kappa_1)}^\sharp = {(p_k \circ u \circ \kappa_1)}^\sharp = {(p_k \circ [\eta_I, \id_{TI}])}^\sharp = \eta_I^\sharp = \id_{(TI, \mu)}.
			\qedhere
		\]
\end{proof}

\reflexivelift*
\begin{proof}
	Assume $j \colon TB \to A$ is the common section of $q_1$ and $q_2$.
	Then, for $k \in \{1, 2\}$,
	\[
		q_k^\sharp \circ {(\eta_A \circ j \circ \eta_B)}^\sharp = U({(U(q_k^\sharp) \circ \eta_A \circ j \circ \eta_B)}^\sharp) = U({(q_k \circ j \circ \eta_B)}^\sharp) = U(\eta_B^\sharp) = \id_{TB}.
		\qedhere
	\]
\end{proof}

\fi%

\end{document}